\newcommand{\Section}[1]{\section{#1}}
\date{}
\newtheorem{theorem}{Theorem}
\newtheorem*{theorem*}{Theorem}
\newtheorem{lemma}{Lemma}
\theoremstyle{definition}
\newtheorem{definition}{Definition}
\theoremstyle{definition}
\newtheorem{remark}{Remark}
\newtheorem*{remark*}{Remark}
\newtheoremstyle{assume}
  {3pt}
  {3pt}
  {}
  {}
  {\bf}
  {}
  { }
  {\thmname{#1}.\thmnumber{#2}\thmnote{ \textnormal{(\textit{#3})}}}
\theoremstyle{assume}
\DeclareMathOperator{\DKL}{D_{KL}}
\let\Pr\relax
\DeclareMathOperator{\Pr}{\mathbb{P}}
\DeclareMathOperator*{\argmax}{argmax}
\DeclareMathOperator*{\minimize}{minimize}
\newcommand{\norm}[1]{\ensuremath{\left\| #1 \right\|}}
\newcommand{\calA}{\ensuremath{\mathcal{A}}}
\newcommand{\calC}{\ensuremath{\mathcal{C}}}
\newcommand{\calD}{\ensuremath{\mathcal{D}}}
\newcommand{\calE}{\ensuremath{\mathcal{E}}}
\newcommand{\calF}{\ensuremath{\mathcal{F}}}
\newcommand{\calL}{\ensuremath{\mathcal{L}}}
\newcommand{\calO}{\ensuremath{\mathcal{O}}}
\newcommand{\calT}{\ensuremath{\mathcal{T}}}
\newcommand{\bA}{\ensuremath{\bm{A}}}
\newcommand{\bB}{\ensuremath{\bm{B}}}
\newcommand{\bG}{\ensuremath{\bm{G}}}
\newcommand{\bS}{\ensuremath{\bm{S}}}
\def\st/{\textsuperscript{st}}
\def\nd/{\textsuperscript{nd}}
\def\rd/{\textsuperscript{rd}}
\def\th/{\textsuperscript{th}}
\newcommand{\npod}{\ensuremath{n_\textup{pod}}}
\def\Sys{007\xspace}
\newcommand{\includesvg}[2][scale=1]{%
\includegraphics[#1]{#2.pdf}%
}
\date{}
\title{\vspace{-2.0cm} \Sys: Democratically Finding The Cause of Packet Drops\\(Extended Version)}
\renewcommand\AB@affilsepx{,\protect\Affilfont}
\renewcommand\Affilfont{\itshape\small}
\author[1]{\rm Behnaz Arzani}
\author[2]{\rm Selim Ciraci}
\author[3]{\rm Luiz Chamon}
\author[1]{\rm Yibo Zhu}
\author[1]{\rm Hongqiang Liu}
\author[2]{\rm Jitu Padhye}
\author[3]{\rm Boon Thau Loo}
\author[2]{\rm Geoff Outhred}
\affil[1]{Microsoft Research}
\affil[2]{Microsoft}
\affil[3]{University of Pennsylvania}
\begin{document}


\maketitle

\noindent{\textbf{Abstract --}} Network failures continue to plague datacenter operators as
        their symptoms may not have direct correlation with where or why they occur. We introduce \Sys, a lightweight, always-on
        diagnosis application that can find problematic links and also pinpoint problems {\em for each
        TCP connection}. \Sys is completely contained within the end host. During its two month deployment in a tier-1 datacenter,  it detected every problem found by previously deployed monitoring tools while also finding the sources of other problems previously undetected.  

\vspace{-1mm}
\section{Introduction}
\label{sec:introduction}
\vspace{-1mm}

\Sys has an ambitious goal: for every packet drop on a TCP flow in a
datacenter, find the link that dropped the packet and do so
with negligible overhead and no changes to the network infrastructure.

This goal may sound like an overkill---after all, TCP is supposed to be able to
deal with a few packet losses. Moreover, packet losses might occur due to
congestion instead of network equipment failures. Even network failures might
be transient. Above all, there is a danger of drowning in a sea of data without
generating any actionable intelligence.

These objections are valid, but so is the need to diagnose
``failures'' that can result in severe problems for
applications. For example, in our datacenters, VM images are stored
in a storage service. When a VM boots, the image is mounted
over the network. Even a small network outage or a few lossy links can cause the VM to ``panic'' and reboot. In fact, 17\% of our VM
reboots are due to network issues and in over $70\%$ of these none of our monitoring tools were able to find the 
links that caused the problem.




VM reboots affect customers and we need to
understand their root cause.  Any persistent pattern in such transient
failures is a cause for concern and is potentially actionable. One example is silent packet drops~\cite{pingmesh}.
These types of problems are nearly impossible to detect with traditional monitoring
tools (e.g., SNMP).  If a switch is experiencing these problems, we may
want to reboot or replace it.  These interventions are
``costly'' as they affect a large number of flows/VMs. Therefore, careful blame assignment is necessary. Naturally, this is only one example that would benefit from such a detection system.

There is a lot of prior work on network failure diagnosis, though one of the existing systems meet our ambitious goal. Pingmesh~\cite{pingmesh}
sends periodic probes to detect failures and can leave ``gaps'' in coverage, as it must manage the overhead of probing.
Also, since it uses out-of-band probes, it 
cannot detect failures that affect only in-band data. Roy et al.~\cite{alex} monitor all paths to detect failures but require modifications to routers and special features in the switch~(\S\ref{sec:relatedwork}).
Everflow~\cite{everflow} can be used to find the
location of packet drops but it would require capturing all traffic and is
not scalable. We asked our operators what would be the most useful solution for them. Responses included: ``In a network of $\ge 10^6$ links its a reasonable assumption that there is a non-zero chance that
a number ($>10)$ of these links are bad (due to device, port, or cable, etc.) and we cannot fix them simultaneously. Therefore, fixes need to be prioritized based on
customer impact. However, currently we do not have a direct way to correlate customer impact with bad links". This shows that current systems do not satisfy operator needs as they do not provide application and connection level context.

To address these limitations, we propose \Sys, a simple, lightweight,
always-on monitoring tool. \Sys records the path of TCP connections~(flows)
suffering from one or more retransmissions and assigns proportional ``blame''
to each link on the path. It then provides a {\em ranking} of links
that represents their relative drop rates. Using this ranking, it can find the most likely cause
of drops in each TCP flow.

\Sys has several noteworthy properties. First, it does not require any changes to the existing networking infrastructure. Second, it does not require changes to the client software---the monitoring agent is an independent entity that
sits on the side. Third, it detects in-band failures. Fourth, it continues to perform well in the presence of noise (e.g.~lone packet drops). Finally, it's overhead is negligible.

While the high-level design of \Sys appear simple, the
practical challenges of making \Sys work and the theoretical challenge
of {\em proving} it works are non-trivial.  For example, its
path discovery is based on a traceroute-like approach. Due to the
use of ECMP, traceroute packets have to be carefully crafted to ensure
that they follow the same path as the TCP flow. Also, we
must ensure that we do not overwhelm routers 
by sending too many traceroutes (traceroute responses are handled by
control-plane CPUs of routers, which are quite puny). Thus,
we need to ensure that our sampling strikes
the right balance between accuracy and the overhead on
the switches.  On the theoretical side, we are able to show that \Sys's
simple blame assignment scheme is highly accurate even in the presence
of noise.

We make the following contributions: (i)~we design \Sys, a simple, lightweight, and yet accurate fault localization system for datacenter networks; (ii)~we prove that \Sys is accurate without imposing excessive burden on the switches; (iii)~we prove that its blame assignment scheme correctly finds the failed links with high probability; and (iv)~we show how to tackle numerous practical challenges involved in deploying \Sys in a real datacenter. 

Our results from a two month deployment of \Sys in a datacenter show that it finds
all problems found by other previously deployed monitoring tools while
also finding the sources of problems for which information is not provided
by these monitoring tools.  


\vspace{-1mm}
\section{Motivation}
\label{sec:motivation}

\Sys aims to identify the cause of retransmissions with high probability. It is is driven by two practical requirements: (i)~it should scale to datacenter size networks and (ii)~it should be deployable in a running datacenter with as little change to the infrastructure as possible. Our current focus is mainly on analyzing infrastructure traffic, especially connections to services such as storage as these can have severe consequences~(see~\S\ref{sec:introduction},~\cite{Netpoirot}).  Nevertheless, the same mechanisms can be used in other contexts as well~(see~\S\ref{sec:discussion}). We deliberately include congestion-induced retransmissions. If episodes of congestion, however short-lived, are common on a link, we want to be able to flag them. Of course, in practice, any such system needs to deal with a certain amount of noise, a concept we formalize later.

There are a number of ways to find the cause of packet drops. One can
monitor switch counters. These are inherently
unreliable~\cite{netpilot} and monitoring thousands of switches at a
fine time granularity is not scalable. One can use new hardware
capabilities to gather more useful
information~\cite{mohammad}. Correlating this data with each
retransmission {\em reliably} is difficult. Furthermore, time is
needed until such hardware is production-ready and switches are
upgraded. Complicating matters, operators may be
unwilling to incur the expense and overhead of such
changes~\cite{Netpoirot}. One can use PingMesh~\cite{pingmesh} to send
probe packets and monitor link status. Such systems suffer from a rate
of probing trade-off: sending too many probes creates unacceptable
overhead whereas reducing the probing rate leaves temporal and spatial
gaps in coverage. More importantly, the probe traffic does not capture
what the end user and TCP flows see. Instead, we
choose to use data traffic itself as probe traffic. Using data traffic
has the advantage that the system introduces little to no monitoring
overhead.

As one might expect, almost all traffic in our datacenters is TCP
traffic. One way to monitor TCP traffic is to use a system like
Everflow. Everflow inserts a special tag in every packet and has the
switches mirror tagged packets to special collection servers. Thus, if
a tagged packet is dropped, we can determine the link on which it
happened. Unfortunately, there is no way to know in advance which
packet is going to be dropped, so we would have to tag and mirror
every TCP packet. This is clearly infeasible. We could tag only a
fraction of packets, but doing so would result in another sampling
rate trade-off. Hence, we choose to rely on some form of network
tomography~\cite{tomo1,tomo2,tomo3}. We can take advantage of the fact
that TCP is a connection-oriented, reliable delivery protocol so that
any packet loss results in retransmissions that are easy to detect.

If we knew the path of all flows, we could set up an optimization to
find which link dropped the packet. Such an optimization would minimize the number of ``blamed'' links while simultaneously explaining the cause of all
drops. Indeed past approaches such as MAX COVERAGE
and Tomo~\cite{netdiagnoser, risk} aim to approximate the solution of such an optimization (see~\S\ref{sec:appendix} for an example).
There are
problems with this approach: (i)~the optimization is
NP-hard~\cite{bertsimas}. Solving it on a datacenter scale is
infeasible. (ii)~tracking the path of every flow in the
datacenter is not scalable in our setting. We can use alternative
solutions such as Everflow or the approach of~\cite{alex} to
track the path of SYN packets. However, both rely on making changes to
the switches. The only way to find the path of a flow without any
special infrastructure support is to employ something like a
traceroute. Traceroute relies on getting ICMP TTL exceeded messages
back from the switches. These messages are generated by the
control-plane, i.e., the switch CPU. To avoid overloading the CPU, our
administrators have capped the rate of ICMP responses to 100 per
second. This severely limits the number of flows we can track.


\begin{figure}[tb]
	\centering
	\includesvg{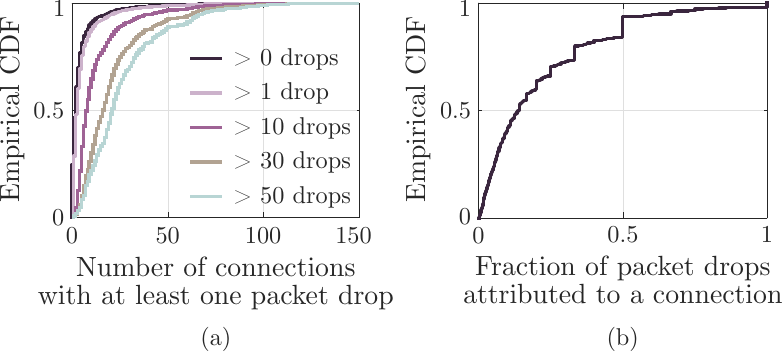}
	\vspace*{-7mm}
	\caption{Observations from a production network: (a)~CDF of the number of flows with at least one retransmission; (b)~CDF of the fraction of drops belonging to each flow in each $30$ second interval.}
	\label{fig:numflowswithdrops}
	\vspace{-7mm}
\end{figure}

Given these limitations, what can we do? We analyzed the drop patterns in two of our datacenters and found:
typically when there are packet drops, multiple flows
experience drops. We show this in Figure~\ref{fig:numflowswithdrops}a
for TCP flows in production datacenters. The figure shows the number of flows
experiencing drops in the datacenter conditioned on the total number
of packets dropped in that datacenter in $30$ second intervals. The
data spans one day.  We see that the
more packets are dropped in the datacenter, the more flows experience
drops and $95\%$ of the time, at least $3$ flows see
drops when we condition on $\ge 10$ total drops. We focus
on the $\ge 10$ case because lower values mostly capture noisy drops
due to one-off packet drops by healthy links.  In most cases
drops are distributed across flows and no single flow sees more
than $40\%$ of the total packet drops.  This is shown in
Figure \ref{fig:numflowswithdrops}b (we have discarded all flows with
$0$ drops and cases where the total number of drops was less than
10). We see that in $\ge 80\%$ of cases, no single flow captures more
than $34\%$ of all drops.

Based on these observations and the high path diversity in datacenter
networks~\cite{f10}, we show that if: (a)~we only track the path of those flows that have retransmissions, (b)~assign each link on the
path of such a flow a vote of $1/h$, where $h$ is the path length, and
(c)~sum up the votes during a given period, then the top-voted links
are almost always the ones dropping packets~(see \S\ref{sec:vote})!
Unlike the optimization, our scheme is able to provide a {\em ranking}
of the links in terms of their drop rates, i.e. if link $A$ has a
higher vote than $B$, it is also dropping more packets~(with high
probability). This gives us a heat-map of our network which highlights
the links with the most impact to a {\em given application/customer}
(because we know which links impact a particular flows). 


\vspace{-1mm}
\section{Design Overview}
\vspace{-1mm}

\begin{figure}[t]
  \centering
    \includesvg{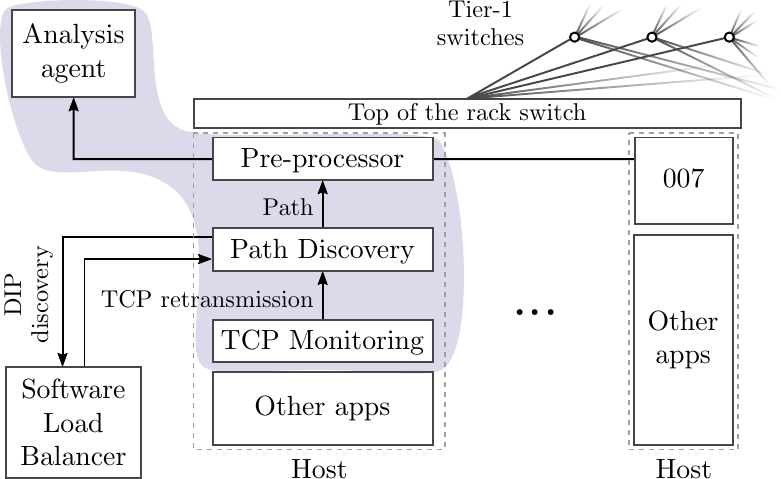}
    \caption{Overview of \Sys architecture\label{fig:overview}}
	\vspace{-6mm}
\end{figure}


Figure~\ref{fig:overview} shows the overall architecture of
\Sys. It is deployed alongside other
applications on each end-host as a user-level process running in the
host OS.  \Sys consists of three agents responsible for TCP monitoring, path discovery, and analysis.

The {\em TCP monitoring agent} detects
retransmissions at each end-host. The Event Tracing For Windows
(ETW)~\cite{etw} framework\footnote{Similar functionality exists in
  Linux.} notifies the agent as soon as an active flow
suffers a retransmission.

Upon a retransmission, the monitoring agent triggers the {\em
  path discovery agent}~(\S\ref{sec:pathdiscovery}) which identifies the flow's path to the destination IP (DIP).

At the end-hosts, a voting scheme~(\S\ref{sec:vote}) is used based on the paths of flows that had retransmissions. At regular intervals of $30$s the votes are tallied by a centralized {\em analysis agent} to find the top-voted links. Although we use an aggregation interval of $30$s, failures \emph{do not} have to last for $30$s.

\Sys's implementation consists of 6000 lines of C++ code. Its memory usage never goes beyond $600$~KB on any
of our production hosts, its CPU utilization is minimal~($1$-$3\%$), and its bandwidth utilization due to traceroute is minimal~(maximum of $200$~KBps per host). \Sys is proven to be accurate (\S\ref{sec:vote}) in typical datacenter conditions (a full description of the assumed conditions can be found in~\S\ref{sec:discussion}).


\vspace{-1mm}
\section{The Path Discovery Agent} 
\label{sec:pathdiscovery} 
\vspace{-1mm}

The path discovery agent uses traceroute packets to find the path of flows that suffer retransmissions. These packets are used solely to identify the path of a flow. They do not need to be dropped for \Sys to operate. We first ensure that the number of traceroutes sent by the agent does not overload our switches~(\S\ref{sec:cpu}). Then, we briefly describe the key engineering issues and how we solve them~(\S\ref{sec:eng}). 

\vspace{-1mm}
\subsection{ICMP Rate Limiting}
\label{sec:cpu}
\vspace{-1mm}
Generating ICMP packets in response to traceroute consumes switch CPU,
which is a valuable resource.  In our network, there
is a cap of $T_\text{max}=100$ on the number of ICMP messages a switch can send per second. To ensure that the traceroute load does not exceed $T_\text{max}$, we start by noticing that a small fraction of flows go through tier-3 switches~($T_3$). Indeed, after monitoring all TCP flows in our network for one hour, only~$2.1\%$ went through a~$T_3$ switch. Thus we can ignore $T_3$ switches in our analysis. Given
that our network is a Clos topology and assuming that hosts
under a top of the rack switch~(ToR) communicate with hosts under a
different ToR uniformly at random (see~\S\ref{sec:sim} for when this is not the case):


\vspace{-1mm}
\begin{theorem}\label{theorem1}
The rate of ICMP~packets sent by any switch
due to a traceroute is below~$T_\textup{max}$ if the rate~$C_t$ at which
hosts send traceroutes is upper bounded as
\begin{align}\label{eq:maxC}
    C_t \leq \frac{
         T_\textup{max} } {n_0 H} \min \left[ n_1, \frac{n_2(n_0 \npod-1)} {n_0(\npod-1)}   \right] 
        \text{,}
\end{align}
where~$n_0$, $n_1$, and~$n_2$, are the numbers of ToR, $T_1$, and~$T_2$ switches respectively, $n_{pod}$ is the number of pods, and~$H$ is the number of hosts under each ToR.

\end{theorem}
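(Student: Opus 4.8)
The plan is to bound, for each switch tier, the rate of traceroute packets that transit that switch, since each such packet triggers one ICMP TTL-exceeded response, and then to enforce that this rate stays below $T_\text{max}$. Fix a host; it emits traceroutes at rate $C_t$, and there are $n_0 H$ hosts total, so the aggregate traceroute rate in the datacenter is $C_t n_0 H$. Under the uniform random communication assumption, each traceroute's path is determined by the source ToR and destination ToR (ignoring $T_3$, as justified by the $2.1\%$ measurement). I would first compute the expected number of traceroutes crossing a generic $T_1$ switch, then a generic $T_2$ switch, and take the binding (smaller) constraint — that is the origin of the $\min$ in~\eqref{eq:maxC}.

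For the $T_1$ tier: in a Clos/fat-tree, every traceroute that stays within a pod or leaves a pod must ascend through some $T_1$ switch of its source pod (and symmetrically descend through a $T_1$ of the destination pod). By symmetry over the $n_1$ many $T_1$ switches, a uniformly routed traceroute hits a fixed $T_1$ switch with probability on the order of $1/n_1$ per ascent. Multiplying the total traceroute rate $C_t n_0 H$ by this $1/n_1$ factor and requiring it to be at most $T_\text{max}$ yields the first term $C_t \le T_\text{max} n_1 / (n_0 H)$. For the $T_2$ (spine) tier: only inter-pod traceroutes reach $T_2$ switches. The fraction of destinations lying in a \emph{different} pod is $(n_0 \npod - 1)/(\npod - 1)$ corrected appropriately — more precisely, of the $n_0\npod$ ToRs, a given source shares its pod with $n_0 - 1$ others, so the probability the destination is in another pod is $(n_0\npod - n_0)/(n_0\npod - 1) = n_0(\npod-1)/(n_0\npod-1)$; wait — I would need to be careful with the exact combinatorial fraction here, but the point is this inter-pod probability, times a $1/n_2$ load-balancing factor for a fixed $T_2$ switch, times the total rate $C_t n_0 H$, must be $\le T_\text{max}$, giving the second term $C_t \le T_\text{max}\, n_2 (n_0\npod - 1) / \bigl(n_0 H\, n_0(\npod - 1)\bigr)$. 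Taking the minimum of the two bounds and factoring out $T_\text{max}/(n_0 H)$ reproduces~\eqref{eq:maxC}.

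The main obstacle I expect is getting the combinatorial routing fractions exactly right rather than up to constants: precisely how many distinct $T_1$ (resp. $T_2$) switches a path of a given source–destination ToR pair can traverse, how ECMP spreads load over them, and the exact inter-pod probability as a function of $n_0$ and $\npod$ — these are what pin down the coefficients $n_1$ and $n_2(n_0\npod-1)/(n_0(\npod-1))$ inside the bracket. A secondary subtlety is making the "expected rate $\le T_\text{max}$" argument rigorous: either interpret $T_\text{max}$ as a constraint on the expected/steady-state ICMP rate under the uniform-destination model, or add a concentration remark that over the $30$s aggregation window the realized rate is close to its mean. I would also want to double-check that the $T_1$ constraint, not just the $T_2$ one, can actually be the binding one for realistic $n_1, n_2, n_0, \npod$ (otherwise the $\min$ would be vacuous), and note explicitly that dropping $T_3$ switches only \emph{relaxes} the true constraint, so the bound remains safe.
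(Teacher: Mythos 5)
Your proposal takes essentially the same route as the paper's proof: it bounds the traceroute load seen by a generic $T_1$ and a generic $T_2$ switch under uniform ECMP and uniform ToR-to-ToR communication, obtains the two constraints whose minimum appears in~\eqref{eq:maxC} (the paper phrases this via per-link rates $R_1 = C_t H / n_1$ and $R_2 = \frac{n_0}{n_1 n_2}\frac{n_0(\npod-1)}{n_0\npod-1}C_t H$ and then sums over the links incident to each switch), and your combinatorial fractions match the paper's exactly. The only slip is calling $C_t n_0 H$ the datacenter-wide rate ($n_0$ is the number of ToRs \emph{per pod}, so that quantity is one pod's worth of traffic), but since each tier's constraint is driven by a single pod's sources in the paper's accounting as well, the resulting bound is identical.
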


%

See~\S\ref{sec:appendix} for proof.
The upper bound of~$C_t$ in our datacenters is~$10$. As long as hosts do not have more than~$10$ flows \emph{with retransmissions} per second, we can \emph{guarantee} that the number of traceroutes sent by \Sys will not go above $T_\text{max}$. We use $C_t$ as a threshold to limit the traceroute rate of each host. Note that there are two independent rate limits, one set at the host by \Sys and the other set by the network operators on the switch~($T_\text{max}$). Additionally, the agent triggers path discovery \emph{for a given connection} no more than once every epoch to further limit the number of traceroutes.  We will show in \S\ref{sec:vote} that this number is sufficient to ensure high accuracy.

\vspace{-1mm}
\subsection{Engineering Challenges}
\label{sec:eng}
\vspace{-1mm}
\noindent{\bf Using the correct five-tuple.} As in most datacenters, our network also uses ECMP. All packets of a given flow, defined by the five-tuple, follow the same path~\cite{ECMP}. Thus, traceroute packets must have the same five-tuple as the flow we want to trace. To ensure this, we must account for load balancers.

TCP connections are initiated in our datacenter in a way similar
to that described in~\cite{ananta}. The connection is first
established to a virtual IP (VIP) and the SYN packet (containing the VIP
as destination) goes to a software load balancer (SLB) which
assigns that flow to a physical destination IP (DIP) and a
service port associated with that VIP. The SLB then sends a
configuration message to the virtual switch (vSwitch) in the
hypervisor of the source machine that registers that DIP with that
vSwitch. The destination of all subsequent packets in that flow
have the DIP as their destination and do not go through the SLB. For the path of the traceroute packets to match that of the data
packets, its header should contain the DIP and not the VIP. Thus,
before tracing the path of a flow, the path discovery agent
first queries the SLB for the VIP-to-DIP mapping for that flow. An alternative is to query the vSwitch. In the instances where the failure also results in connection termination the mapping may be removed from the vSwitch table. It is therefore more reliable to query the SLB. Note that there are cases where the TCP connection
establishment itself may fail due to packet loss. Path discovery is
not triggered for such connections. It is also not triggered when the
query to the SLB fails to avoid tracerouting the internet.

\noindent{\bf Re-routing and packet drops.} Traceroute itself may fail. This may happen if the link drop rate is high or due to a blackhole.  This actually helps us, as it directly pinpoints the faulty link and our analysis engine~(\S\ref{sec:vote}) is able to use such partial traceroutes.

A more insidious possibility is that routing may change by the time
traceroute starts. We use BGP in our datacenter and a lossy link
may cause one or more BGP sessions to fail, triggering rerouting.
Then, the traceroute packets may take a different path than the original
connection. However, RTTs in a datacenter are
typically less than 1 or 2~ms, so TCP retransmits a
dropped packet quickly. The ETW framework notifies the monitoring agent
immediately, which invokes the path discovery agent. The only additional delay
is the time required to query the SLB to obtain the VIP-to-DIP mapping, which is
typically less than a millisecond. Thus, as long as paths are stable for a few
milliseconds after a packet drop, the traceroute packets will follow the same
path as the flow and the probability of error is low. Past work has shown this to be usually the case~\cite{ffc}.

Our network also makes use of link aggregation~(LAG)~\cite{linkagg}.
However, unless all the links in the aggregation group fail, the L3 path is not
affected.

\noindent{\bf Router aliasing~\cite{aliasing}.} This problem is easily solved in a datacenter, as we
know the topology, names, and IPs of all routers and interfaces. We can
simply map the IPs from the traceroutes to the switch names.

To summarize, \Sys's path discovery implementation is as follows:
Once the TCP monitoring agent notifies the path discovery agent that a
flow has suffered a retransmission, the path discovery agent checks its
cache of discovered path for that epoch and if need be, queries the SLB for the
DIP.  It then sends~$15$ appropriately crafted TCP packets with TTL values ranging
from~$0$--$15$. In order to disambiguate the responses, the TTL value is also
encoded in the IP ID field~\cite{ip}. This allows for concurrent
traceroutes to multiple destinations.  The TCP packets deliberately carry a
bad checksum so that they do not interfere with the ongoing connection.


\vspace{-1mm}
\Section{The Analysis Agent}
\label{sec:vote}
\vspace{-1mm}



Here, we describe \Sys's analysis agent focusing on its voting-based scheme. We also present alternative NP-hard optimization solutions for comparison.

\vspace{-1mm}
\subsection{Voting-Based Scheme}
    \label{sec:votingdetails}
\vspace{-1mm}

\Sys's analysis agent uses a simple voting scheme. If a
flow sees a retransmission, \Sys votes its links as \emph{bad}. Each vote has a value that is tallied at the end of every epoch, providing a natural ranking of the links. We set the value of good votes to~$0$ (if a flow has no retransmission, no traceroute
is needed). Bad votes are assigned a value of~$\frac{1}{h}$,
where~$h$ is the number of hops on the path, since each link on
the path is equally likely to be responsible for the drop.

The ranking obtained after compiling the votes allows us to
identify the most likely cause of drops on each flow: links ranked higher have higher drop rates~(Theorem~\ref{T:vigilWorks}). To further
guard against high levels of noise, we can use our knowledge of the
topology to adjust the links votes. Namely, we iteratively pick the most voted link~$l_\text{max}$ and estimate the portion of votes obtained by all other links due to failures on~$l_\text{max}$. This estimate is obtained for each link~$k$ by (i)~assuming all flows having retransmissions and going through~$l_\text{max}$ had drops due to~$l_\text{max}$ and (ii)~finding what fraction of these flows go through~$k$ by assuming ECMP distributes flows uniformly at random. Our evaluations showed that this results in a $5\%$ reduction in false positives.

\begin{algorithm}
 \begin{algorithmic}[1]
 {\scriptsize
\STATE $\mathcal{L} \gets$  \text{Set of all links}\\
\STATE $\mathcal{P} \gets$ \text {Set of all {\em possible} paths}\\
\STATE $v(l_i) \gets$ \text{Number of votes for $l_i \in \mathcal{L}$}\\
\STATE $\mathcal{B} \gets$ \text{Set of most problematic links}\\
\STATE $l_{max} \gets \text{Link with maximum votes in $\forall l_i \in \mathcal{L} \cap \mathcal{B}^c$}$\\
\WHILE{$ v(l_{max}) \ge 0.01(\sum_{l_i \in \mathcal{L}} v(l_i))$}
  \STATE $ l_{max} \gets \argmax_{l_i \in \mathcal{L} \cap \mathcal{B}^c} v(l_i)$
  \STATE $\mathcal{B} \gets \mathcal{B} \cup \{l_{max}\}$
  \FOR {$l_i \in  \mathcal{L} \cap \mathcal{B}^c$}
  \IF{ $\exists \quad p_i \in \mathcal{P} \quad \text{s.t. } l_i \in p_i  \text{ \& }  l_{max} \in p_i$}
  \STATE Adjust the score of $l_i$
  \ENDIF
  \ENDFOR
\ENDWHILE
\RETURN  $\mathcal{B}$
}

\end{algorithmic}

\caption{Finding the most problematic links in the network.\label{alg:vigil}}

\end{algorithm}

\Sys can also be used to detect failed links using Algorithm~\ref{alg:vigil}.
The algorithm sorts the links based on their votes and uses a threshold to determine if there are problematic links. If so, it adjusts the votes of all other links and repeats until no link has votes above the threshold. In Algorithm~\ref{alg:vigil}, we use a threshold of~$1\%$ of the total votes cast based on a parameter sweep where we found that it provides a reasonable trade-off between precision and recall. Higher values reduce false positives but increase false negatives.

Here we have focused on detecting 
link failures. \Sys can also
be used to detect switch failures in a similar fashion by applying votes to switches instead of links. This is beyond the scope of
this work.

\vspace{-1mm}
\subsection{Voting Scheme Analysis}

\label{sec:reliable}
\vspace{-1mm}

Can \Sys deliver on its promise of finding the most probable cause of
packet drops on each flow? This is not trivial. In its voting scheme,
failed connections contribute to increase the tally of both good and bad
links. Moreover, in a large datacenter such as ours, occasional,
lone, and sporadic drops can and \emph{will} happen due to good
links. These failures are akin to noise and can cause severe
inaccuracies in any detection system~\cite{gestalt}, \Sys
included. We show that the likelihood of \Sys making
these errors is small. Given our topology~(Clos):

\vspace{-1mm}
\begin{theorem}
\label{T:vigilWorks}

For~$\npod \geq \frac{n_0}{n_1} + 1$, \Sys will find with
probability~$1-2e^{-\calO(N)}$ the~$k < \frac{n_2 (n_0 \npod - 1)}{n_0 (\npod - 1)}$ bad links
that drop packets with probability~$p_b$ among good links that drop packets with probability $p_g$ if
\begin{align*}
	p_g \leq (n_{u} \alpha)^{-1} \left[ 1-(1-p_b)^{n_{l}} \right], 
\end{align*}
where $N$ is the total number of flows between hosts, $n_l$ and~$n_u$ are
lower and upper bounds, respectively, on the number of packets per
connection, and
\begin{equation}\label{eq:alpha}
    \alpha =
    \frac{
        n_0 (4 n_0 - k) (\npod - 1)
    }{
        n_2 (n_0 \npod - 1) - n_0 (\npod - 1) k
    }
        \text{.}
\end{equation}

\end{theorem}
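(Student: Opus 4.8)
The plan is to reduce the claim to a concentration statement about per-link vote tallies. Fix a link $l$ and, for each of the $N$ host-to-host flows $f$, let $X_f^{(l)} = \frac{1}{h_f}\ones[\,l \in \text{path}(f)\ \text{and}\ f\ \text{retransmits}\,]$, where the randomness is the ECMP path hash of $f$ together with the independent per-packet drops on the links of $\text{path}(f)$. Distinct flows are hashed independently and drop packets independently, so for a fixed $l$ the variables $\{X_f^{(l)}\}_f$ are independent and lie in $[0,1]$, and the vote of $l$ is $V_l=\sum_f X_f^{(l)}$. The event ``\Sys\ returns exactly the $k$ bad links'' contains $\{\min_{b\ \text{bad}}V_b>\max_{g\ \text{good}}V_g\}$, so it suffices to (i) separate the two sides' expectations by an $\Omega(N)$ margin and (ii) show each $V_l$ concentrates around its mean at scale $o(N)$, then union bound over link pairs. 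Partial traceroutes (blackholes) only shorten $h_f$ and shrink the blamed set, so they can only help both (i) and (ii); I would treat them by the same argument with $h_f$ replaced by the truncated length and note this as a remark.

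For the mean separation I would compute, under the uniform-random-ToR-pair assumption, the probability $\pi_l := \Pr[l\in\text{path}(f)]$ for a generic flow via the same Clos counting used for Theorem~\ref{theorem1} (purely combinatorial, depending only on the tier of $l$ and on $n_0,n_1,n_2,\npod,H$), together with the pairwise probabilities $\pi_{g,b}:=\Pr[g,b\in\text{path}(f)]$. Conditioned on $l\in\text{path}(f)$: if $l$ is bad the flow retransmits with probability at least $q_b:=1-(1-p_b)^{n_l}$, giving $\mathbb{E}[V_b]\ge N\pi_b q_b/h_{\max}$; if $l$ is good, a union bound over the links of the path gives retransmission probability at most $n_u p_g$ (the good link $l$ itself, using $1-(1-p_g)^{n_u}\le n_u p_g$) plus the chance one of the at most $k$ bad links sharing a path with $l$ drops a packet, giving $\mathbb{E}[V_g]\le N\pi_g (n_u p_g)/h_{\min}+\sum_{b\ \text{bad}}N\pi_{g,b}q_b/h_{\min}$. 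The constant $\alpha$ of~\eqref{eq:alpha} is exactly (an upper bound on) the worst-case topological ratio for which the hypothesis $p_g\le(n_u\alpha)^{-1}q_b$, i.e.\ $n_up_g\le q_b/\alpha$, forces $\mathbb{E}[V_g]\le\frac12\mathbb{E}[V_b]$ for every bad--good pair: the factor $4n_0-k$ counts the tier-crossing links a single good link can lie on a common path with (minus the $k$ already counted as bad), and the $\npod$ terms come from counting inter-pod ToR pairs, while the feasibility condition $\npod\ge n_0/n_1+1$ (equivalently the stated bound on $k$) keeps the denominator of $\alpha$ positive and the good-link bound binding. This yields $\mathbb{E}[V_b]-\mathbb{E}[V_g]\ge cN$ for a constant $c>0$ depending only on the fixed topology parameters and on $p_b$.

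With the margin in hand, Hoeffding's inequality applied to $V_l=\sum_f X_f^{(l)}$ (a sum of at most $N$ independent $[0,1]$ variables) gives $\Pr[\,|V_l-\mathbb{E}[V_l]|>t\,]\le 2e^{-2t^2/N}$. Taking $t=cN/2$ gives a per-link tail $2e^{-\calO(N)}$, and a union bound over the $\binom{L}{2}$ link pairs — where $L$, the number of links, is a fixed polynomial in $n_0,n_1,n_2,\npod,H$ — leaves the probability that some good link outscores some bad link still $2e^{-\calO(N)}$, since $\log L$ is a constant relative to the growing traffic $N$. Intersecting these events over all pairs establishes $\min_{b}V_b>\max_{g}V_g$ with the claimed probability.

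The main obstacle is the second step, the Clos combinatorics: obtaining bounds on $\pi_b$, $\pi_g$ and especially the pairwise $\pi_{g,b}$ tight enough that the resulting sufficient condition is \emph{exactly} $p_g\le(n_u\alpha)^{-1}[1-(1-p_b)^{n_l}]$ with $\alpha$ as in~\eqref{eq:alpha}, rather than some looser surrogate. Concretely, one must identify — over all placements of the $k$ bad links among the tiers — the configuration that maximizes a good link's contaminated vote relative to the \emph{least}-loaded bad link's vote; this is where $4n_0-k$ and the inter-pod ToR-pair counts enter, and one then has to check the algebra collapses to~\eqref{eq:alpha}. The concentration and union-bound steps are routine once the $\Omega(N)$ margin is in place.
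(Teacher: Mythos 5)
Your overall architecture matches the paper's: the paper also (i) reduces correctness to the event that every bad link out-tallies every good link, (ii) relates the vote probabilities $v_b, v_g$ to retransmission probabilities by Clos path-counting, and (iii) closes with an exponential concentration bound (it uses the Chernoff/KL-divergence form for binomial tallies after approximating all votes as equal-valued, justified by the observation that $97.5\%$ of connections have hop count $5$; your Hoeffding bound on the $1/h_f$-weighted sums is a slightly cleaner route to the same $2e^{-\calO(N)}$). The difference is that the paper actually executes the combinatorial step you defer: its key lemma decomposes the event ``link $l$ receives a vote'' into disjoint sub-events indexed by which link on the shared path caused the retransmission, computes each sub-event's probability exactly under uniform ECMP, lower-bounds $v_b$ by the single event ``the bad link itself drops,'' upper-bounds $v_g$ by the union bound after placing all $k$ bad links in the sub-event with the largest coefficient (this worst-case placement is where the hypotheses $n_0 \geq n_2$ and $\npod \geq n_0/n_1 + 1$ enter), and shows the level-2 bound dominates the level-1 bound. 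In particular, the $4$ in $4n_0 - k$ arises as the sum of the four union-bound coefficients for a level-2 link, not as a count of tier-crossing neighbours of a good link, so your heuristic for where $\alpha$ comes from is not quite the right picture.

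Two concrete gaps. First, the step you label ``the main obstacle'' is the entire content of the theorem: without carrying out the worst-case placement argument and the ensuing algebra, you have not derived $\alpha$, only asserted that some computation should yield it, so the proposal is a plan rather than a proof. Second, your margin requirement $\E[V_g] \leq \tfrac{1}{2}\E[V_b]$ is not what the hypothesis $p_g \leq (n_u\alpha)^{-1}\left[1-(1-p_b)^{n_l}\right]$ buys you: at the boundary of that condition the Clos calculation gives only $v_g \leq v_b$, not a factor-two gap, so insisting on a factor of $\tfrac{1}{2}$ would force you to replace $\alpha$ by roughly $2\alpha$ and prove a strictly weaker theorem. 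The paper instead parametrizes the gap by $\delta \leq (v_b - v_g)/(v_b + v_g)$ and lets the constant hidden in the $\calO(N)$ exponent degrade as the gap shrinks; you should do the same, taking your Hoeffding deviation $t$ proportional to the actual mean separation rather than to $\E[V_b]/2$.
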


%
%


The proof is deferred to the appendices due to space constraints. Theorem~\ref{T:vigilWorks} states that under mild conditions, links with higher drop rates are ranked higher by \Sys. Since a single flow is unlikely to go through more than one failed link in a network with thousands of links, it allows \Sys to find the most likely cause of packet drops on each flow.

A corollary of Theorem~\ref{T:vigilWorks} is
that in the absence of noise~($p_g = 0$), \Sys can find all bad
links with high probability. In the presence of
noise, \Sys can still identify the bad links as long as the
probability of dropping packets on non-failed links is low
enough~(the signal-to-noise ratio is large enough). This number is compatible with typical values found in practice. As an example, let~$n_l$ and~$n_u$ be the~$10^{th}$ and~$90^{th}$ percentiles respectively of the number of packets sent by TCP
flows across all hosts in a $3$~hour period. If~$p_b \geq 0.05\%$, the drop rate on good links can be as high as~$1.8 \times 10^{-6}$. Drop rates in a production datacenter are typically below~$10^{-8}$~\cite{RAIL}.

Another important consequence of Theorem~\ref{T:vigilWorks} is that it
establishes that the probability of errors in \Sys's results diminishes exponentially
with $N$, so that even with the limits imposed by
Theorem~\ref{eq:maxC} we can accurately identify the failed links. The conditions in
Theorem~\ref{T:vigilWorks} are sufficient but not necessary. In fact,
\S\ref{sec:sim} shows how well \Sys performs even when the
conditions in Theorem~\ref{T:vigilWorks} do not hold.

\vspace{-1mm}
\subsection{Optimization-Based Solutions}
\label{sec:optimal}

\vspace{-1mm}
One of the advantages of \Sys's voting scheme is its simplicity. Given additional time and resources we may consider searching for the optimal sets of failed links by finding the most likely cause of drops given the available evidence. For instance, we can find the {\em least number of links} that explain all failures as we know the flows that had packet drops and their path. This can be written as an optimization problem we call the~\emph{binary program}. Explicitly,
\vspace{-1mm}
\begin{equation}\label{opt:binary}
\begin{aligned}
	\minimize&
	&&\norm{\bm{p}}_0
	\\
	\text{subject to}&
	&&\bm{A} \bm{p} \geq \bm{s}
	\\
	&&&\bm{p} \in \{0,1\}^L
\end{aligned}
\end{equation}
%
where~$\bm{A}$ is a~$C \times L$ routing matrix; $\bm{s}$ is a~$C \times 1$ vector that collects the status of each flow during an epoch (each element of~$\bm{s}$ is~$1$ if the connection experienced at least one retransmission and~$0$ otherwise); $L$ is the number of links; $C$ is the number of connections in an epoch; and~$\norm{\bm{p}}_0$ denotes the number of nonzero entries of the vector~$\bm{p}$. Indeed, if the solution of~\eqref{opt:binary} is~$\bm{p}^\star$, then the~$i$-th element of~$\bm{p}^\star$ indicates whether the binary program estimates that link~$i$ failed. 

Problem~\eqref{opt:binary} is the NP-hard minimum set covering problem~\cite{combinatorial} and is intractable. Its solutions can be approximated greedily as in MAX COVERAGE or Tomo~ \cite{netdiagnoser, risk} (see appendix). For benchmarking, we compare \Sys to the true solution of~\eqref{opt:binary} obtained by a mixed-integer linear program~(MILP) solver~\cite{mosek}. Our evaluations showed that \Sys~(Algorithm~\ref{alg:vigil}) significantly outperforms this binary optimization~(by more than $50\%$ in the presence of noise). We illustrate this point in Figures~\ref{fig:reasonable} and~\ref{fig:alg1Single}, but otherwise omit results for this optimization in~\S\ref{sec:sim} for clarity.

The binary program~\eqref{opt:binary} does not provide a ranking of links. We also consider a solution in which we determine the number of packets dropped by each link, thus creating a natural ranking. The \emph{integer program} can be written as
\vspace{-1mm}
\begin{equation}\label{opt:integer}
\begin{aligned}
	\minimize&
	&&\norm{\bm{p}}_0
	\\
	\text{subject to}&
	&&\bm{A} \bm{p} \geq \bm{c}
	\\
	&&&\norm{\bm{p}}_1 = \norm{\bm{c}}_1
	\\
	&&&p_i \in \mathbb{N} \cup \{0\}
\end{aligned}
\end{equation}
%
where $\mathbb{N}$ is the set of natural numbers and~$\bm{c}$ is a~$C \times 1$ vector that collects the number of retransmissions suffered by each flow during an epoch. The solution~$\bm{p}^\star$ of~\eqref{opt:integer} represents the number of packets dropped by each link, which provides a ranking. The constraint~$\norm{\bm{p}}_1 = \norm{\bm{c}}_1$ ensures each failure is explained only once. As with~\eqref{opt:binary}, this problem is NP-hard~\cite{bertsimas} and is only used as a benchmark. As it uses more information than the binary program~(the number of failures), \eqref{opt:integer} performs better~(see~\S\ref{sec:sim}).

In the next three sections, we present our evaluation of \Sys in simulations~(\S\ref{sec:sim}), in a test cluster~(\S\ref{sec:test}), and in one of our production datacenters~(\S\ref{sec:prod}).


\vspace{-1mm}
\section{Evaluations: Simulations}
\label{sec:sim}
\vspace{-1mm}

We start by evaluating in simulations where we know the ground
truth. \Sys first finds flows whose drops
were due to noise and marks them as ``noise drops''. It then finds the
link most likely responsible for drops on the remaining set of
flows~(``failure drops''). A noisy drop is defined as one where
the corresponding link only dropped a single packet. \Sys never marked a connection into the noisy category
incorrectly. We therefore focus on the accuracy for connections that
\Sys puts into the failure drop class.

\noindent{\bf Performance metrics.} Our measure for the
performance of \Sys is \emph{accuracy}, which is the proportion of
correctly identified drop causes.
For evaluating Algorithm~\ref{alg:vigil}, we use {\em recall} and {\em
  precision}. Recall is a measure of reliability and shows how many of
the failures \Sys can detect~(false negatives). For
example, if there are $100$ failed links and \Sys detects $90$ of them,
its recall is~$90\%$. Precision is a measure of accuracy and shows to
what extent \Sys's results can be trusted~(false positives). For
example, if \Sys flags $100$ links as bad, but only $90$ of those links
actually failed, its precision is~$90\%$.

\noindent{\bf Simulation setup.} We use a flow level simulator~\cite{sims}
implemented in MATLAB. Our topology consists of $4160$ links, $2$ pods, and~$20$ ToRs per pod. Each host
establishes $2$ connections per second to a random ToR
outside of its rack. The simulator has two types of links. For {\em
  good links}, packets are dropped
at a very low rate chosen uniformly from $(0,10^{-6})$ to simulate
noise. On the other hand, {\em failed links} have a higher drop
rate to simulate failures. By default, drop rates on failed links are
set to vary uniformly from $0.01$\% to $1$\%, though to study the
impact of drop rates we do allow this rate to vary as an input
parameter. The number of good and failed links is also tunable. Every $30$ seconds of simulation time, we send
up to $100$ packets per flow and drop them based on the
rates above as they traverse links along the path. The simulator
records all flows with at least one drop and for each such
flow, the link with the most drops.

We compare \Sys against the solutions described in~\S\ref{sec:optimal}. We only show results for the binary program~\eqref{opt:binary} in Figures~\ref{fig:reasonable} and~\ref{fig:alg1Single} since its performance is typically inferior to \Sys and the integer program~\eqref{opt:integer} due to noise. This also applies to MAX COVERAGE or Tomo~ \cite{netdiagnoser, risk,rachit} \textit{as they are approximations of the binary program}~(see~\cite{tr}).


\vspace{-1mm}
\subsection{In The Optimal Case}
\vspace{-1mm}



The bounds of Theorem~\ref{T:vigilWorks}  are
sufficient (not necessary) conditions for accuracy. We first
validate that \Sys can achieve high levels of accuracy as
expected when these bounds hold. We set the drop rates on
the failed links to be between $(0.05\%,1\%)$. We refer the reader to~\cite{alex} for why these drop rates are reasonable.


\noindent{\bf Accuracy.} Figure~\ref{fig:theorem2} shows that \Sys
has an average accuracy that is higher than $96\%$ in almost all
cases. Due to its robustness to noise, it also
outperforms the optimization algorithm~(\S~\ref{sec:optimal}) in most
cases.


\noindent{\bf Recall \& precision.} Figure~\ref{fig:reasonable}
shows that even when failed links have low packet drop rates, \Sys
detects them with high recall/precision. 


We proceed to evaluate \Sys's accuracy
when the bounds in Theorem~\ref{T:vigilWorks} {\em do not} hold. This shows these conditions are not necessary for good performance.

\begin{figure}[tb]
	\centering
	\includesvg{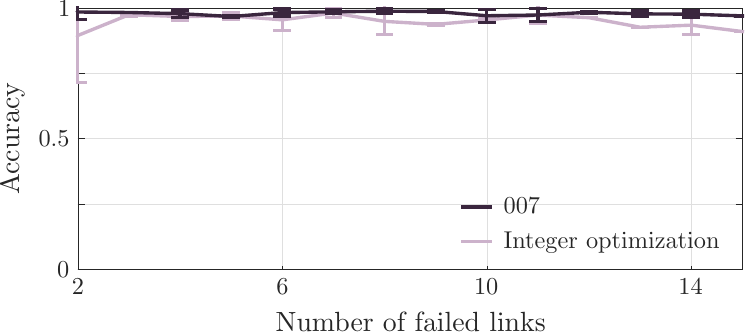}
	\vspace{-2mm}
	\caption{When Theorem 2 holds.}
		\label{fig:theorem2}
		\vspace{-3mm}
\end{figure}

\begin{figure}[tb]
	\centering
	\includesvg{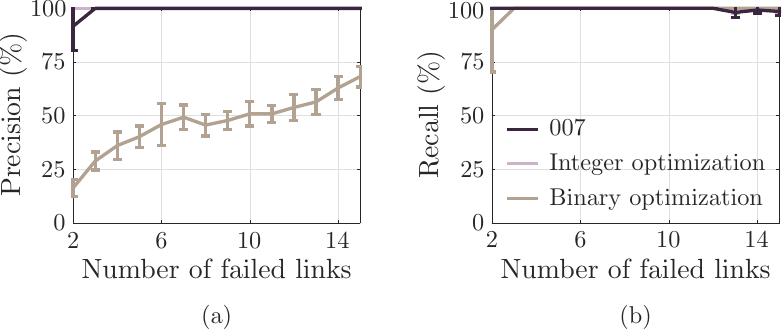}
	\vspace*{-6mm}
	\caption{Algorithm~\ref{alg:vigil} when Theorem 2 holds.}
		\label{fig:reasonable}
	\vspace{-3mm}
\end{figure}

\vspace{-1mm}
\subsection{Varying Drop Rates}
\label{sec:Drops}
\vspace{-1mm}
Our next experiment aims to push the boundaries of
Theorem~\ref{T:vigilWorks} by varying the ``failed'' links drop rates
below the conservative bounds of Theorem~\ref{T:vigilWorks}.


\noindent{\textbf{Single Failure.}} Figure~\ref{fig:dropRate}a shows
 results for different drop rates on a single failed link. It shows that \Sys can find
the cause of drops on each flow with high accuracy. Even
as the drop rate decreases below the bounds of
Theorem~\ref{T:vigilWorks}, we see that \Sys can
maintain accuracy on par with the
optimization.


\noindent{\textbf{Multiple Failures.}} Figure~\ref{fig:dropRate}b shows that \Sys is successful at 
finding the link responsible for a drop even when links have very different drop rates. 
Prior work have reported the difficulty of detecting such cases~\cite{alex}. However, \Sys's accuracy remains high.

\begin{figure}[t]
	\centering
	\includesvg{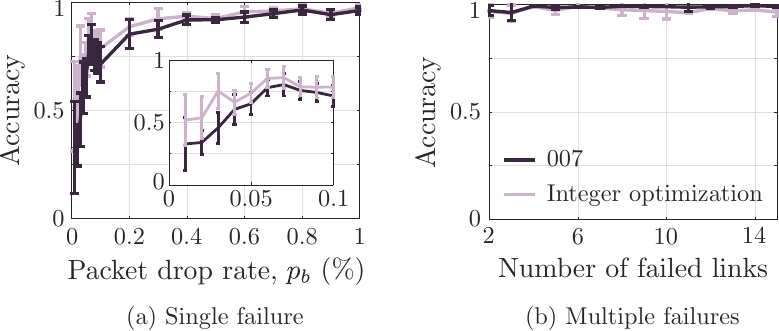}
	\vspace*{-6mm}
	\caption{\Sys's accuracy for varying drop rates.}
		\label{fig:dropRate}
		\vspace{-3mm}
\end{figure}

\vspace{-1mm}
\subsection{Impact of Noise}
	\label{sec:Noise}

\vspace{-1mm}
\noindent{\textbf{Single Failure.}} We vary noise levels by changing the drop rate of good links.
We see that higher noise levels have little impact on \Sys's ability to find the cause of
drops on individual flows~(Figure~\ref{fig:noise}a).


\noindent{\textbf{Multiple Failures.}} We repeat this experiment for
the case of $5$ failed links. Figure~\ref{fig:noise}b shows the
results. \Sys shows little sensitivity to the increase in noise
when finding the cause of per-flow drops. Note that
the large confidence intervals of the optimization
is a result of its high sensitivity to noise.

\begin{figure}[tb]
  \centering
	\includesvg{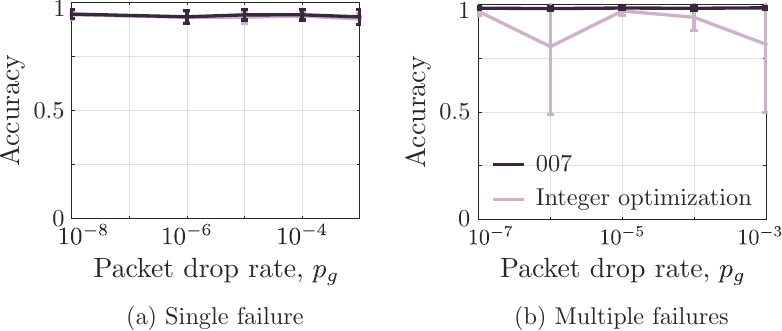}
	\vspace*{-6.5mm}
	\caption{\Sys's accuracy for varying noise levels.}
		\label{fig:noise}
	\vspace{-3mm}
\end{figure}


\vspace{-1mm}
\subsection{Varying Number of Connections}
\label{sec:singleNumCon}
\vspace{-1mm}

In previous experiments, hosts
opened~$60$ connections per epoch. Here, we allow hosts to choose the
number of connections they create per epoch uniformly at random
between $(10,60)$. Recall, from Theorem~\ref{T:vigilWorks}, that a
larger number of connections from each host helps \Sys improve its
accuracy. 

\noindent{\textbf{Single Failure.}}  Figure~\ref{fig:connections}a
shows the results. \Sys accurately finds the cause of  packet drops
on each connection. It also
outperforms the optimization when the failed link has a low
drop rate. This is because the optimization has multiple optimal
points and is not sufficiently constrained.


\noindent{\textbf{Multiple Failures.}} Figure~\ref{fig:connections}b
shows the results for multiple failures.  The optimization
suffers from the lack of information to constrain the set of
results. It therefore has a large variance~(confidence intervals). \Sys on the other hand maintains high
probability of detection no matter the number of
failures.

\begin{figure}[tb]
  \centering
\includesvg{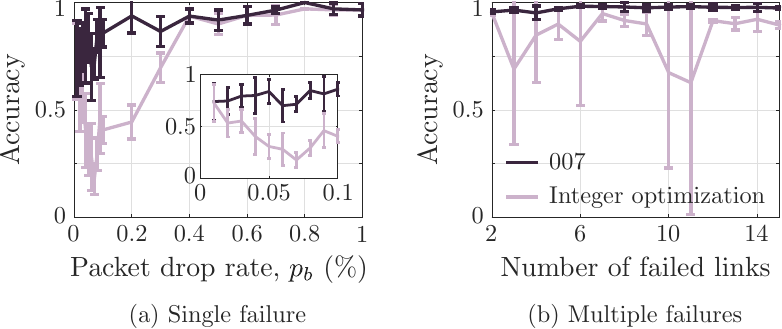}
\vspace*{-7mm}
\caption{Varying  the number of connections.}
		\label{fig:connections}
\end{figure}

\vspace{-1mm}
\subsection{Impact of Traffic Skews}
	\label{sec:singleSkew}
\vspace{-1mm}	

\noindent{\textbf{Single Failure.}} We next demonstrate \Sys's ability
to detect the cause of drops even under heavily skewed traffic.
We pick $10$ ToRs at random ($25\%$ of the ToRs). To skew the traffic, 80\% of the
flows have destinations set to hosts under these $10$ ToRs.
The remaining flows are routed to randomly chosen hosts. Figure~\ref{fig:trafficSkew}a shows that the optimization is much more heavily impacted by the skew than
\Sys. \Sys continues to detect the cause of drops with high
probability ($\ge 85\%$) for drop rates higher than $0.1\%$. 

\begin{figure}[tb]
  	\centering
	\includesvg{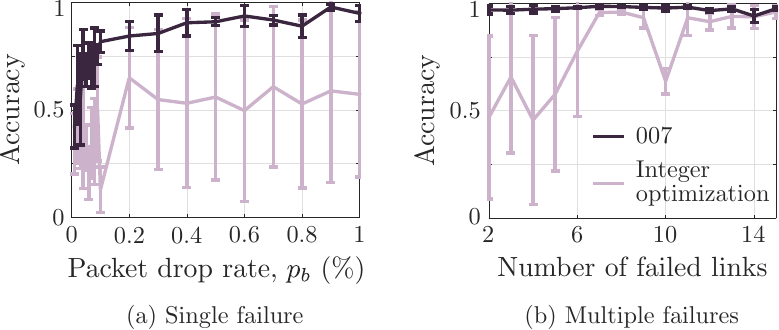}
	\vspace*{-7mm}
     \caption{\Sys's accuracy under skewed traffic.}
		\label{fig:trafficSkew}
	\vspace{-3mm}
\end{figure}

\noindent{\textbf{Multiple Failures.}} We repeated the above for multiple failures. Figure~\ref{fig:trafficSkew}b shows
that the optimization's accuracy suffers. It consistently shows a low
detection rate as its constraints are not sufficient in guiding the
optimizer to the right solution. \Sys maintains a
detection rate of $\ge 98\%$ at all times.

\noindent{\textbf{Hot ToR.}} A special instance of traffic skew occurs in the presence of a single hot ToR which acts as a sink for a large number
of flows. Figure~\ref{fig:hottor} shows how \Sys performs in these situations. \Sys can 
tolerate up to $50\%$ skew, i.e., $50\%$ of {\em all} flows go to the hot ToR, with negligible accuracy degradation. However, skews above $50\%$ negatively impact its
accuracy in the presence of a large number of failures ($\ge10$).  Such scenarios are unlikely as datacenter load balancing mitigates such extreme situations.

\begin{figure}[tb]
	\centering
	\includesvg{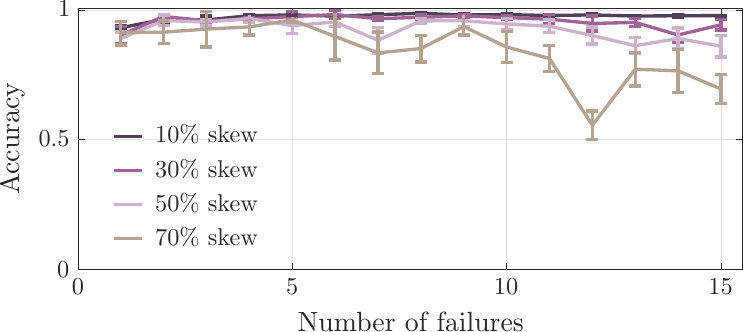}
	\vspace*{-2mm}
    \caption{Impact of a hot ToR on \Sys's accuracy. }
		\label{fig:hottor}
	\vspace{-3mm}
\end{figure}

\vspace{-1mm}
\subsection{Detecting Bad Links}
\label{sec:singleAlg}
\vspace{-1mm}

In our previous experiments, we focused on \Sys's 
accuracy on a per connection basis. In our next experiment,
we evaluate its ability to detect bad links.

\noindent{\textbf{Single Failure.}} Figure~\ref{fig:alg1Single} shows
the results. \Sys outperforms the optimization as it does not require
a fully specified set of equations to provide a best guess as
to which links failed. We also evaluate the impact of failure location 
on our results (Figure~\ref{fig:linklocation}).

\begin{figure}[tb]
  \centering
	\includesvg{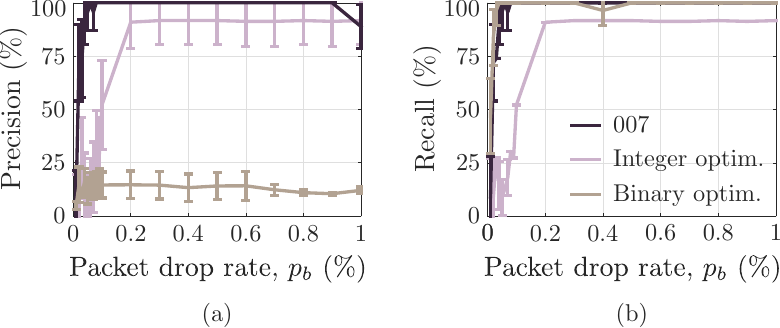}
	\vspace*{-7mm}
     \caption{Algorithm~\ref{alg:vigil} with single failure.}
     \label{fig:alg1Single}
    \vspace{-5mm}
\end{figure}

\noindent{\textbf{Multiple Failures.}} We heavily skew the drop rates on the failed links. Specifically, at
least one failed link has a drop rate between~$10$ and~$100\%$,
while all others have a drop rate in~$(0.01\%,0.1\%)$. This scenario is one
that past approaches have reported as hard to detect~\cite{alex}.
Figure~\ref{fig:alg1Multiple} shows that \Sys can detect up to $7$
failures with accuracy above $90\%$. Its recall drops
as the number of failed links increase. This is because
the increase in the number of failures drives up the votes of all
other links increasing the cutoff threshold and thus increasing
the likelihood of false negatives. In fact if the top $k$
links had been selected \Sys's recall would have been close to
$100\%$.

\begin{figure}[tb]
    \centering
	\includesvg{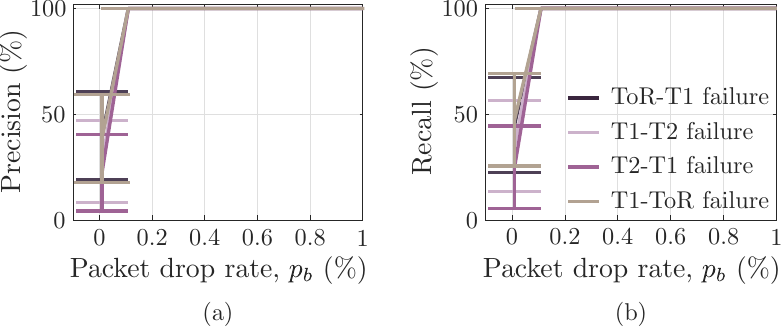}
		\vspace*{-7mm}
    \caption{Impact of link location on Algorithm~\ref{alg:vigil}.}
		\label{fig:linklocation}
	\vspace{-3mm}
\end{figure}

\begin{figure}[tb]
  \centering
\includesvg{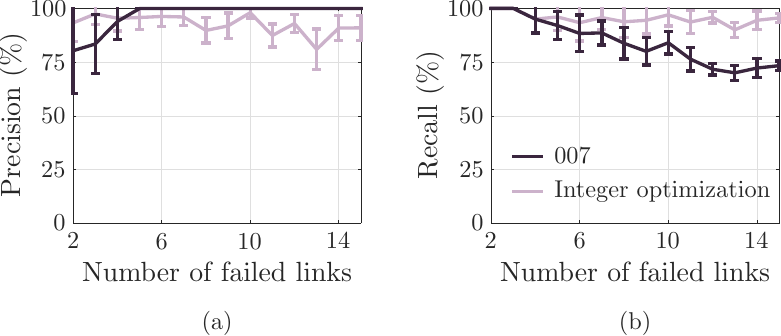}
	\vspace*{-7mm}
   \caption{Algorithm~\ref{alg:vigil} with multiple failures. The drop rates on the links
   are heavily skewed. }
		\label{fig:alg1Multiple}
		\vspace{-5mm}
\end{figure}

\vspace{-1mm}
\subsection{Effects of Network Size}
\vspace{-1mm}

Finally, we evaluate \Sys in larger networks. Its accuracy when finding a single failure was~$98\%$, $92\%$, $91\%$, and $90\%$ on average in a network with $1,2,3,$ and $4$ pods respectively. In
contrast, the optimization had an average
accuracy of~$94\%$, $72\%$, $79\%$, and $77\%$ respectively. Algorithm~\ref{alg:vigil} continues to have Recall $\ge 98\%$ for up to $6$ pods (it drops to $85\%$ for $7$ pods). Precision remains~$100\%$ for all pod sizes.

We also evaluate both \Sys and the optimization's ability to find
the cause of per flow drops when the number of failed links is~$\ge 30$. We observe that both approach's performance
remained unchanged for the most part, e.g., the accuracy of \Sys in an example with~$30$ failed links is
$98.01\%$.

\vspace{-1mm}
\section{Evaluations: Test Cluster}
\label{sec:test}
\vspace{-1mm}


We next evaluate \Sys on the more realistic environment of a test cluster with $10$ ToRs and a total of~$80$ links. We control~$50$ hosts in the cluster, while others are production machines. Therefore, the~$T_1$ switches see real production traffic. We recorded~$6$ hours of traffic from a host in production and replayed it from our hosts in the cluster~(with different starting times). Using Everflow-like functionality~\cite{everflow} on the ToR switches, we induced different rates of drops on~$T_1$ to ToR links. Our goal is to find the cause of packet drops on each flow~\S\ref{sec:testPerCon} and to validate whether Algorithm 1 works in practice~\S\ref{sec:alg1}.



\vspace{-1mm}
\subsection{Clean Testbed Validation}
\vspace{-1mm}

We first validate a clean testbed environment. We repave the cluster by setting all devices to a clean state. We then run \Sys without injecting any failures. We see that in the newly-repaved cluster, links arriving at a particular ToR switch had abnormally high votes, namely $22.5 \pm 3.65$ in average. We thus suspected that this ToR is experiencing problems. After rebooting it, the total votes of the links went down to $0$, validating our suspicions. This exercise also provides one example of when \Sys is extremely effective at identifying links with low drop rates.



\vspace{-1mm}
\subsection{Per-connection Failure Analysis}
\label{sec:testPerCon}
\vspace{-1mm}
Can \Sys identify the cause of drops when links have very different drop rates? To find out, we induce a drop rate of~$0.2\%$ and~$0.05\%$ on two different links for an hour. We only know the ground truth when the flow goes through at least one of the two failed links. Thus, we only consider such flows. For $90.47\%$ of these, \Sys was able to attribute the packet drop to the correct link~(the one with higher drop rate).

\begin{figure}[htb]
  \centering
    \includesvg{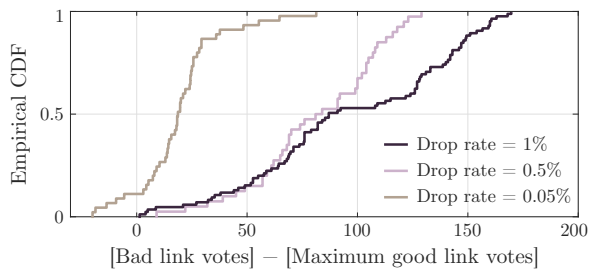}
  	\vspace*{-2mm}
    \caption{Distribution of the difference between votes on bad links and the maximum vote on good links for different bad link drop rates.\label{fig:testclusterdroprates}}
	\vspace{-13mm}
\end{figure}

\vspace{-1mm}
\subsection{Identifying Failed Links}
\label{sec:alg1}
\vspace{-1mm}


We next validate Algorithm~1 and its ability to detect failed links. We inject different drop rates on a chosen link and determine whether there is a correlation between total votes and drop rates. Specifically, we look at the difference between the vote tally on the bad link and that of the most voted good link. We induced a packet drop rate of~$1\%$, $0.1\%$, and~$0.05\%$ on a~$T_1$ to ToR link in the test cluster.

Figure~\ref{fig:testclusterdroprates} shows the distribution for the various drop rates. The failed link has the highest vote out of all links when the drop rate is~$1\%$ and~$0.1\%$. When the drop rate is lowered to $0.05\%$, the failed link becomes harder to detect due to the smaller gap between the drop rate of the bad link and that of the normal links. Indeed, the bad link only has the maximum score in~$88.89\%$ of the instances~(mostly due to occasional lone drops on healthy links). However, it is always one of the~$2$ links with the highest votes.

Figure~\ref{fig:testclusterdroprates} also shows the high correlation between the probability of packet drop on a links and its vote tally. This trivially shows that \Sys is $100\%$ accurate in finding the cause of packet drops on each flow given a single link failure: the failed link has the highest votes among all links. We compare \Sys with the optimization problem in~\eqref{opt:integer}. We find that the latter also returns the correct result every time, albeit at the cost of a large number of false positives. To illustrate this point: the number of links marked as bad by~\eqref{opt:integer} on average is $1.5$, $1.18$, and~$1.47$ times higher than the number given by \Sys for the drop rates of~$1\%$, $0.1\%$, and~$0.05\%$ respectively.

What about multiple failures? This is a harder experiment to configure due to the smaller number of links in this test cluster and its lower path diversity. We induce different drop rates~($p_1 = 0.2\%$ and~$p_2 = 0.1\%$) on two links in the cluster. The link with higher drop rate is the most voted~$100\%$ of the time. The second link is the second highest ranked~$47\%$ of the time and the third~$32\%$ of the time. It always remained among the~$5$ most voted links. This shows that by allowing a single false positive~(identifying three instead of two links), \Sys can detect all failed links~$80\%$ of the time even in a setup where the traffic distribution is highly skewed. This is something past approaches~\cite{alex} could not achieve. In this example, \Sys identifies the true cause of packet drops on each connection~$98\%$ of the time.

\vspace{-2mm}
\section{Evaluations: Production}
\label{sec:prod}
\vspace{-1mm}


We have deployed \Sys in one of our datacenters\footnote{The monitoring agent has been deployed across all our data centers for over $2$ years.}. 
Notable examples of problems \Sys found include: power supply undervoltages~\cite{EOS}, FCS errors~\cite{monia}, switch reconfigurations, continuous BGP state changes, link flaps, and software bugs~\cite{cisco}. Also, \Sys found every problem that was caught by our previously deployed diagnosis tools. 
\vspace{-1mm}
\subsection{Validating Theorem~\ref{theorem1}}
\vspace{-1mm}
Table~\ref{fig:numIcmp} shows the distribution of the number of ICMP messages sent by each switch in each epoch over a week. The number of ICMP messages generated by \Sys never exceed $T_{max}$ (Theorem~\ref{theorem1}). 

\begin{table}
\begin{center}
{
\begin{tabular}{cccc}

\textbf{$T=0$} & \textbf{$T>0$ \& $T\le 3$} & \textbf{$T > 3$} & \textbf{$\max (T)$}\\
\hline 
\hline
$69\%$ & $30.98\%$ & $0.02\%$ & $11$\\
\end{tabular}
}
\caption{Number of ICMPs per second per switch~($T$). We see $\max(T) \le T_{\text{max}}$.}
\label{fig:numIcmp}
\end{center}
\vspace{-9mm}
\end{table}

%
%

\vspace{-1mm}
\subsection{TCP Connection Diagnosis}
\vspace{-1mm}
In addition to finding problematic links, \Sys identifies the most likely cause of drops on each flow.
Knowing when each individual packet is dropped in production is hard. We perform a semi-controlled experiment to test the accuracy of \Sys. Our environment consists of thousands of hosts/links. To find the ``ground truth'',
we compare its results to that obtained by EverFlow. EverFlow captures all packets going
through each switch on which it was enabled. It is expensive to run for extended periods of time. We thus only run EverFlow for $5$ hours and configure it to capture all outgoing IP traffic from $9$ random hosts. The captures for each host were conducted on different days. We filter all flows that were detected to have at least one retransmission during this time and using EverFlow find where their packets were dropped.
We then check whether the detected link matches that found by \Sys. We found that {\em \Sys was accurate in every single case}. In this test we also verified that each path recorded by \Sys matches exactly the path taken by that flow's packets as captured by EverFlow. This confirms that it is unlikely for paths to change fast enough to cause errors in \Sys's path discovery.

\vspace{-1mm}
\subsection{VM Reboot Diagnosis}
\vspace{-1mm}

During our deployment, there were~$281$ VM reboots in the datacenter
for which there was no explanation. \Sys found a link as the cause of
problems in each case. Upon further investigation on the SNMP system
logs, we observe that in~$262$ cases, there were transient drops on
the host to ToR link a number of which were correlated with high CPU
usage on the host. Two were due to high drop rates on the ToR.  In
another~$15$, the endpoints of the links found were undergoing
configuration updates. In the remaining~$2$ instances, the link was
flapping.

Finally, we looked at our data for one cluster for one day. \Sys identifies an average of $0.45\pm0.12$ links as dropping packets per epoch. The average across all epochs of the maximum vote tally was $2.51\pm0.33$. Out of the links dropping packets $48\%$ are server to ToR links ($38\%$ were due to a single ToR switch that was eventually taken out for repair), $24\%$ are $T_1$-ToR links and $6\%$ were due to $T_2$-$T_1$ link failures.


\vspace{-1mm}
\section{Discussion}
\label{sec:discussion}
\vspace{-1mm}

\Sys is highly effective in finding the cause of packet drops on individual flows. By doing so, it provides flow-level context which is useful in finding the cause of problems for specific applications. In this section we discuss a number of other factors we considered in its design.

\subsection{\Sys's Main Assumptions}
The proofs of Theorems~\ref{theorem1} and~\ref{T:vigilWorks} and the design of the path discovery agent (\S\ref{sec:pathdiscovery}) are based on a number of assumptions:

\noindent{\bf ACK loss on reverse path.} It is possible
that packet loss on the reverse path is so severe that loss of ACK
packets triggers timeout at the sender. If this happens, the traceroute
would not be going over any link that triggered the packet drop. Since
TCP ACKs are cumulative, this is typically not a problem and \Sys assumes retransmissions in such cases are unlikely. This is true unless loss
rates are very high, in which case the severity of the problem is such that the cause is apparent. 
Spurious retransmissions triggered by timeouts
may also occur if there is sudden increased delay on forward or reverse
paths. This can happen due to
rerouting, or large queue buildups. \Sys treats these retransmissions like any other.

\noindent{\textbf{Source NATs.}} Source network address translators
(SNATs) change the source IP of a packet before it is sent out to a VIP. Our current implementation of \Sys assumes connections are SNAT bypassed. However, if flows are SNATed, the ICMP messages will not have the right source address for \Sys to get the response to its traceroutes. This can be fixed by a query to the SLB. Details are omitted. 

\noindent{\textbf{L2 networks.}} Traceroute is not a viable option to find paths when datacenters operate using L2 routing. In such cases we recommend one of the following: (a) If access to the destination is not a problem and switches can be upgraded one can use the path discovery methods of~\cite{pathdump,alex}.
\Sys is still useful as it allows for finding the cause of failures when multiple failures are present and for individual flows. (b) Alternatively, EverFlow can be used to find path. \Sys's sampling is necessary here as EverFlow doesn't scale to capture the path of all flows.

\noindent{\textbf{Network topology.}} The
calculations in~\S\ref{sec:vote} assume a known topology~(Clos). The
same calculations can be carried out for {\em any} known
topology by updating the values used for ECMP. The accuracy of \Sys is tied to the degree of path
diversity and that multiple paths are available at each
hop: the higher the degree of path diversity, the better \Sys
performs. This is also a desired property in any datacenter
topology, most of which follow the Clos topology~\cite{googletopo,microsofttopo,f10}.


\noindent{\textbf{ICMP rate limit.}} In rare instances, the severity of a failure or the number of flows impacted by it may be such that it triggers \Sys's ICMP rate limit which stops sending more traceroute messages in that epoch. This {\em does not} impact the accuracy of Algorithm~1. By the time \Sys reaches its rate limit, it has enough data to localize the problematic links. However, this limits \Sys's ability to find the cause of drops on flows for which it did not identify the path. We accept this trade-off in accuracy for the simplicity and lower overhead of \Sys.

\noindent{\textbf{Unpredictability of ECMP.}} If the topology and the ECMP functions on all the routers are known, the path of a packet can be found by inspecting its header. However, ECMP functions are typically proprietary and have
initialization ``seeds'' that change with every reboot of the
switch. Furthermore, ECMP functions change after link failures and recoveries. Tracking all link failures/recoveries in real time is not feasible at a datacenter scale.

\subsection{Other Factors To Consider}
\Sys has been designed for a specific use case, namely finding the cause of packet drops on individual connections in order to provide application context. This resulted in a number of design choices:

\noindent{\textbf{Detecting congestion.}} \Sys~{\em should not} avoid detecting
major congestion events as they signal severe traffic imbalance and/or incast and are actionable. However, the more prevalent ($\ge 92\%$) forms of congestion have low drop rates $10^{-8}$--$10^{-5}$~\cite{monia}. \Sys treats these as noise and does not detect them. Standard congestion control protocols can effectively react to them.

\noindent{\textbf{\Sys's ranking.}}  \Sys's ranking approach will
naturally bias towards the detection of failed links that are
frequently used. This is an intentional design choice as the goal of
\Sys is to identity high impact failures that affect many connections.

\noindent{\textbf{Finding the cause of other problems.}} \Sys's goal is to identify the cause of every packet drop, but other problems may also be of interest. \Sys can be extended to identify the cause of many such problems. For example, for latency, 
ETW provides TCP's smooth RTT estimates upon each received ACK. Thresholding on these values allows for identifying ``failed'' flows and \Sys's 
voting scheme can be used to provides a ranked list of suspects. Proving the accuracy of \Sys for such problems requires an extension of the analysis presented in this paper.

\noindent{\textbf{VM traffic problems.}} \Sys's
goal is to find the cause of drops on infrastructure
connections and through those, find the failed links in the
network. In principle, we can build a \Sys-like system to diagnose TCP failures for connections established by customer VMs as well. For example, we can update the monitoring agent to capture VM TCP statistics through a VFP-like system~\cite{vfp}.
However, such a system raises a number of new issues, chief among them being security. This is part of our future work.

 In conclusion, we stress that the purpose of \Sys is to {\em explain} the cause of drops when they occur. Many of these are not actionable and do not require operator intervention. The tally of votes on a given link provide a starting point for deciding when such intervention is needed.


\vspace{-1mm}
\section{Related Work}
\label{sec:relatedwork}
\vspace{-1mm}
Finding the source of failures in distributed systems,
specifically networks, is a mature topic. We outline
some of the key differences of \Sys with these works.

The most closely related work to ours is perhaps~\cite{alex},
which requires modifications to routers and both endpoints a limitation that \Sys does not have. Often services (e.g. storage) are unwilling to incur the additional overhead of new monitoring software on their machines and in many instances the two endpoints are in seperate organizations~\cite{Netpoirot}. Moreover, in order to
apply their approach to our datacenter, a number of engineering
problems need to be overcome, including finding a substitute
for their use of the DSCP bit, which is used for other purposes in our datacenter. Lastly, while the statistical testing method used in~\cite{alex} (as well as others) are useful when paths of both failed and non-failed flows are available they cannot be used in our setting as 
the limited number of traceroutes \Sys can send prevent it from tracking the path of all flows. In addition \Sys allows for diagnosis of individual connections {\em and} it works well in the presence of multiple simultaneous failures, features that~\cite{alex} does not provide. Indeed, finding paths only when they are needed is one of the most attractive features of \Sys as it minimizes its overhead on the system. Maximum cover algorithms~\cite{maxcover,pathdump} suffer from many of the same limitations described earlier for the binary optimization, since MAX COVERAGE and Tomo are approximations of~\eqref{opt:binary}.  Other related work can be loosely categorized as follows:

\noindent{\bf Inference and Trace-Based
  Algorithms~\cite{sherlock,netmedic,everflow,pingmesh,rinc,tulip,alex,lossradar,flowradar,verification}}
use anomaly detection and trace-based algorithms to find
sources of failures. They require knowledge/inference of
the location of logical devices, e.g. load balancers in
the connection path. While this information is available to the
network operators, it is not clear which instance of these entities a
flow will go over. This
reduces the accuracy of the results.
 
Everflow ~\cite{everflow} aims to accurately identify the path of packets of
interest. However, it does not scale to be used as an always on
diagnosis system. Furthermore, it requires additional features to be enabled in the switch. Similarly,~\cite{rachit,pathdump} provides
another means of path discovery, however, such approaches require deploying new applications to the remote 
end points which we want to avoid (due to reasons described in~\cite{Netpoirot}). Also, they depend on SDN enabled networks and are not applicable to our setting
where routing is based on BGP enabled switches.
  
Some inference approaches aim at \textit{covering} the full topology,
e.g.~\cite{pingmesh}. While this is useful, they typically
only provides a sampled view of connection livelihood and do not achieve the type of
always on monitoring that \Sys provides. The time between probes
for~\cite{pingmesh} for example is currently 5 minutes. It
is likely that failures that happen at finer time scales slip
through the cracks of its monitoring probes.

Other such work, e.g.~\cite{tulip,alex,lossradar,flowradar} require
access to both endpoints and/or switches.  Such access may not always
be possible. Finally, NetPoirot ~\cite{Netpoirot} can only identify
the general type of a problem (network, client, server) rather than
the responsible device.


\noindent\textbf{Network tomography~\cite{tomo1, tomo2, tomo3,
  tomo4, risk,shrink,tomo5,tomo6,tomo7,tomo8,duffield2,gestalt,geoff}}
typically consist of two aspects: (i) the gathering and filtering of
network traffic data to be used for identifying the points of
failure~\cite{tomo5,tomo1} and (ii) using the information found in the
previous step to identify where/why failures occurred~\cite{tomo2,
  tomo3,tomo4,gestalt,netdiagnoser,duffield2,newtomo}. \Sys utilizes
ongoing traffic to detect problems, unlike these approaches
which require a much heavier-weight operation of gathering large
volumes of data. Tomography-based approaches are
also better suited for non-transient failures, while \Sys can handle
both transient and persistent errors. \Sys also has coverage that
extends to the entire network infrastructure, and does not limit
coverage to only paths between designated monitors as some such approaches do. Work on analyzing
failures~\cite{tomo1,tomo4,tomo5,gestalt} are complementary and can be
applied to \Sys to improve our accuracy.






%

\noindent{\bf Anomaly detection~\cite{anom1,anom3,
    assaf,anom4,anomalbert,crovella2014,anomallydetection}} find
when a failure has occurred using machine learning~\cite{assaf,anom1}
and Fourier transforms~\cite{anomalbert}. \Sys goes a step further by finding the device
responsible.


\noindent{\bf Fault Localization by Consensus~\cite{netprofiler}}
assumes that a failure on a node common to the path used by a subset
of clients will result in failures on a significant number of
them. NetPoirot~\cite{Netpoirot} illustrates why this
approach fails in the face of a subset of problems that are common to
datacenters. While our work builds on this idea, it provides
a confidence measure that identifies how reliable a diagnosis report
is.


\noindent{\bf Fault Localization using TCP
  statistics~\cite{tcp1,tcp2,trat,snap,alex}} use TCP metrics for
diagnosis. \cite{tcp1} requires heavyweight active
probing. \cite{tcp2} uses learning techniques. Both~\cite{tcp2}, and
T-Rat~\cite{trat} rely on continuous packet captures which doesn't scale. SNAP~\cite{snap} identifies
performance problems/causes for connections by acquiring TCP
information which are gathered by querying socket options. It also
gathers routing data combined with topology data to compare the TCP
statistics for flows that share the same host, link, ToR, or aggregator switch. Given their lack of
continuous monitoring, all of these approaches fail in detecting the
type of problems \Sys is designed to detect. Furthermore, the goal of
\Sys is more ambitious, namely to find the link that causes packet
drops for each TCP connection.



\noindent{\bf Learning Based
  Approaches~\cite{dtree1,conext15,dtree2,Netpoirot}} do failure
detection in home and mobile networks. Our application domain is
different.

\noindent{\bf Application diagnosis~\cite{app1,mogul}} aim at
identifying the cause of problems in a distributed application's
execution path. The
limitations of diagnosing network level paths and the complexities
associated with this task are different. Obtaining all execution paths
{\em seen} by an application, is plausible in such systems but is not
an option in ours.

\noindent{\bf{Failure resilience in datacenters~\cite{f10,ankit,conga,mptcp,datacenterresilience,datacenter2,fattire,datacenter3,bodik,olaf}}}
target resilience to failures in datacenters. \Sys can be
helpful to a number of these algorithms as it can find
problematic areas which these tools can then help
avoid.


\noindent{\bf{Understanding datacenter failures~\cite{RAIL,gill2011}}}
aims to find the various types of failures in datacenters. They are useful in understanding the types of
problems that arise in practice and to ensure that our
diagnosis engines are well equipped to find them. 
\Sys's analysis agent uses the findings of~\cite{RAIL}.

\vspace{-2.5mm}


\pagebreak
\section{Conclusion}
\vspace{-1mm}
We introduced \Sys, an always on and scalable
monitoring/diagnosis system for datacenters. \Sys can accurately
identify drop rates as low as $0.05\%$ in datacenters with thousands
of links through monitoring the status of ongoing TCP flows. 
\vspace{-2mm}
\section{Acknowledgements}
\vspace{-2mm}
This work was was supported by grants NSF CNS-1513679, DARPA/I2O HR0011-15-C-0098. The authors would like to thank T. Adams, D. Dhariwal, A. Aditya, M. Ghobadi, O. Alipourfard, A. Haeberlen, J. Cao, I. Menache, S. Saroiu, and our shepherd H. Madhyastha for their help.

{\footnotesize \bibliographystyle{acm}
\bibliography{biblio}}

\newpage

\appendix
\label{sec:appendix}


\section{Application example: VM reboots}

In the introduction~(\S~1), we describe an instance in which the failure detection capacities of \Sys can be useful: pinpointing the cause of VM reboots. Indeed, in our datacenters, VM images are stored in a storage service. When a customer boots a VM, the image is mounted over the network. Thus, even a small network outage can cause the host kernel to ``panic'' and reboot the guest VM. We mentioned that over $70\%$ of VM reboots caused by network issues in our datacenters cannot be explained using currently deployed monitoring systems. To further illustrate how important this issue can be, Figure~\ref{fig:E17Dist} shows the number of unexplained VM reboots due to network problems in one day of operations: there were on average~$10$ VM reboots per hour due to unexplained network problems.

\begin{figure}[hbt]
  \centering
    \includesvg{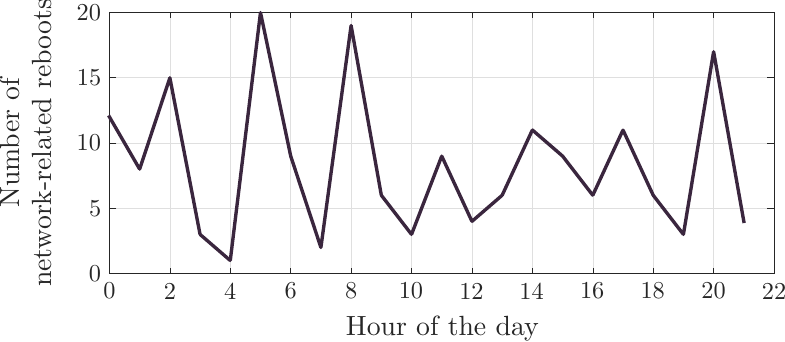}
     \caption{Number of network related reboots in a day.}
     \label{fig:E17Dist}
     \vspace{-5mm}
\end{figure}


\section{Network tomography example}

Knowing the path of all flows, it is possible to find with confidence which link dropped a packet. To do so, consider the example network in Figure~\ref{fig:toy}. Suppose that the link between nodes 2 and 4 drops packets. Flows 1--2 and 3--2 suffer from drops, but 1--3 does not. A set cover optimization, such as the one used by MAX COVERAGE and Tomo~\cite{netdiagnoser, risk}, that minimizes the number of ``blamed'' links will correctly find the cause of drops. This problem is however equivalent to a set covering optimization problem that is known to be NP-complete~\cite{combinatorial}.

\begin{figure}[t]
  \centering
    \includesvg[scale=1.2]{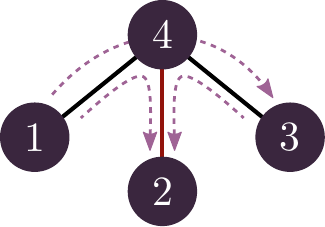}
    \caption{Simple tomography example\label{fig:toy}.}
	\vspace{-5mm}
\end{figure}

\section{Proofs}

\begin{table*}
\centering
\caption{Notation and nomenclature}
	\label{T:notation}
\begin{tabular}{cc}
\hline
$n_\text{pod}$ & Number of pods \\
\hline
$n_0$ & Number of top of the rack~(ToR) switches per pod \\
\hline
$n_1$ & Number of tier-1 switches per pod \\
\hline
$n_2$ & Number of tier-2 switches \\
\hline
Level~1 link & Link between ToR and tier-1 switch \\
\hline
Level~2 link & Link between tier-1 and tier-2 switch \\
\hline
$\calT_0^s$ & Set of all ToR switches in pod $s$ \\
\hline
$\calT_1^s$ & Set of all tier-1 switches in pod $s$ \\
\hline
$\calT_0$ & Set of all ToR switches~($\calT_0 = \calT_0^1 \cup \dots \cup \calT_0^{\npod}$) \\
\hline
$\calT_1$ & Set of all tier-1 switches~($\calT_1 = \calT_1^1 \cup \dots \cup \calT_1^{\npod}$) \\
\hline
$\calT_2$ & Set of all tier-2 switches \\
\hline
$k$ & Number of failed links in the network \\
\hline
$c_u$ & Upper bound on the number of packets per connection \\
\hline
$c_l$ & Lower bound on the number of packets per connection \\
\hline
$p_g$ & Probability that a good link drops a packet \\
\hline
$p_b$ & Probability that a failed link drops a packet \\
\hline
$v_g$ & Probability that a good link receives a vote \\
\hline
$v_b$ & Probability that a bad link receives a vote \\
\hline
$r_g$ & Probability that a good link causes a retransmission~(drops at least one packet) \\
\hline
$r_b$ & Probability that a bad link causes a retransmission~(drops at least one packet) \\
\hline
\end{tabular}
\end{table*}

\begin{definition}[Clos topology]

A Clos topology has $\npod$~pods each with~$n_0$ top of the rack~(ToR) switches under which lie $H$~hosts. The ToR switches are connected to~$n_1$ tier-1 switches by a complete network~($n_0 n_1$ links). Links between tier-0 and tier-1 switches are referred to as \emph{level~1 links}. The tier-1 switches within each pod are connected to~$n_2$ tier-2 switches by another complete network~($n_1 n_2$ links). Links between these switches are called \emph{level~2 links}. This notation is illustrated in Figure~\ref{F:Clos}.

\end{definition}

\begin{remark}[Communication and failure model]
	\label{R:Model}

Assume that connection occur uniformly at random between hosts under different ToR switches. Since the number of hosts under each ToR switch is the same, this is equivalent to saying that connections occur uniformly at random directly between ToR switches. Also, assume that link failure and connection routing are independent and that links drop packets independently across links and across packets.

\end{remark}

\begin{remark}[Notation]

We use calligraphic letter~($\calA$) to denote sets and boldface font~($\bA$) to denote random variables. Also, we write~$[M]$ to mean the set of integers between~$1$ and~$M$, i.e., $[M] = 1,\dots,M$.

\end{remark}

\subsection{Proof of Theorem~1}

%
%

\begin{proof}

Start by noticing that the number of hosts below each ToR switch is the same, so that we can consider that traceroute are sent on flows uniformly at random between ToR switches at a rate~$C_t H$. Moreover, note that routing probabilities are the same for links on the same level, so that the traceroute rate depends only on whether the link is on level~1 or level~2.

Since the probability of a switch routing a connection through any link is uniform, the traceroute rate of a level~1 link is given by
\begin{equation}\label{E:tracerouteBound1}
	R_{1} = \frac{1}{n_1} C_t H
		\text{,}
\end{equation}
Similarly for a level~2 link:
\begin{equation}\label{E:tracerouteBound2}
	R_{2} =  \frac{n_0}{n_1 n_2}
		\frac{n_0 (\npod - 1)}{(n_0 \npod - 1)} C_t H
		\text{,}
\end{equation}
where the second fraction represents the probability of a host connecting to another host outside its own pod, i.e., of going through a level~2 link. Since~$n_0$ links are connected to a tier-1 switch and~$n_1$ links are connected to a tier-2, the rate of ICMP packets at any links is bounded by~$T \leq \max\left[ n_0 R_1, n_1 R_2 \right]$.
Taking $\max\left[ n_0 R_1, n_1 R_2 \right] \leq T_\text{max}$ yields~\eqref{eq:maxC}.
\end{proof}


\subsection{Proof of Theorem~2}

We prove the following more precise statement of Theorem~2.

\begin{theorem}
	\label{T:vigilWorks2}

In a Clos topology with~$n_0 \geq n_2$ and
$\npod \geq 1 + \max \left[ \frac{n_0}{n_1}, \frac{n_2(n_0 - 1)}{n_0(n_0 - n_2)}, 1 \right]$,
\Sys will rank with probability~$(1 - \epsilon)$ the~$k < \frac{n_2 (n_0 \npod - 1)}{n_0 (\npod - 1)}$ bad links that drop packets with probability~$p_b$ above all good links that drop packets with probability~$p_g$ as long as
\begin{equation}\label{E:packetCondition}
	p_g \leq \frac{1 - (1-p_b)^{c_l}}{\alpha c_u}
		\text{,}
\end{equation}
where~$c_l$ and~$c_u$ are lower and upper bounds, respectively, on the number of packets per connection,
\begin{equation}\label{E:alpha}
	\alpha =
    \frac{
        n_0 (4 n_0 - k) (\npod - 1)
    }{
        n_2 (n_0 \npod - 1) - n_0 (\npod - 1) k
    }
		\text{,}
\end{equation}
and
\begin{equation}\label{E:epsilon}
\begin{aligned}
	\epsilon &\leq
	e^{-N \DKL((1+\delta) v_g \| v_g)} +
		e^{-N \DKL((1-\delta) v_b \| v_b)}
	\\
		{}&= 2 e^{-\calO(N)}
		\text{,}
\end{aligned}
\end{equation}
with $v_g$ and $v_b$ being the probabilities of a good and bad link receiving a vote, respectively, $N$ being the total number of connections between hosts, and~$\DKL(q \| r )$ denoting the Kullback-Leibler divergence between two Bernoulli distributions with probabilities of success~$q$ and~$r$.

\end{theorem}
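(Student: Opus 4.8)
The plan is to reduce the ranking statement to a deterministic comparison of expected vote tallies plus a concentration argument, and to read off $\alpha$ from the Clos routing probabilities. Throughout, connections are independent and uniform over ordered pairs of distinct ToR switches (Remark~\ref{R:Model}); a connection's path has $h=2$ links if its two ToRs share a pod and $h=4$ otherwise; and the vote a link gets from a flow is $1/h$ if the link is on the flow's path and the flow retransmits, and $0$ otherwise, so every nonzero vote lies in $[\tfrac14,\tfrac12]$. For each link $\ell$ write $V_\ell=\sum_{i=1}^{N}\tfrac1{h_i}\ones[\ell\in p_i]\ones[\text{flow }i\text{ retransmits}]$ and $Z_i^{(\ell)}=\ones[\ell\in p_i]\ones[\text{flow }i\text{ retransmits}]$, so that $\tfrac14\sum_iZ_i^{(\ell)}\le V_\ell\le\tfrac12\sum_iZ_i^{(\ell)}$ and, since the connections are i.i.d., $\sum_iZ_i^{(\ell)}\sim\mathrm{Binomial}(N,v_\ell)$ with $v_\ell=\Pr[\ell\in p,\ \text{retx}]$. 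The goal reduces to showing $\tfrac14(1-\delta)v_b>\tfrac12(1+\delta)v_g$ for all bad $b$, good $g$ and some fixed $\delta>0$, and then controlling the probability that some tally deviates from its mean.

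First I would compute, exactly as in the proof of Theorem~1, the probability $q_1$ that a random connection traverses a fixed level~1 link, the probability $q_2$ for a fixed level~2 link, and the probability $\tfrac{n_0(\npod-1)}{n_0\npod-1}$ that a connection is inter-pod; I would also record, for a fixed good link $g$ and bad link $b$, the conditional probability $\Pr[b\in p\mid g\in p]$ that a random path through $g$ also passes through $b$, which by ECMP uniformity depends only on the levels of $g,b$ and on whether they share a tier-1 switch. These are the only topological inputs. For a bad link $b$, conditioning on $b\in p$, link $b$ alone triggers a retransmission with probability at least $1-(1-p_b)^{c_l}$ (drops are independent of routing), hence $v_b\ge q_b\,(1-(1-p_b)^{c_l})$. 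For a good link $g$, a flow on $g$ retransmits only if some link of its length-$\le4$ path drops a packet; splitting into ``a good link on the path drops'' (probability $\le c_u p_g$ per good link, at most four of them) and ``a bad link on the path drops'', a union bound gives
\begin{equation*}
  v_g\;\le\;q_g\Big(4\,c_u p_g\;+\;\textstyle\sum_{b}\Pr[b\in p\mid g\in p]\Big).
\end{equation*}

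The heart of the proof is to show that \eqref{E:packetCondition} forces $v_g<v_b$ with a definite relative gap, for every bad $b$ and good $g$. After substituting the two bounds this amounts to controlling $\frac{q_g}{q_b}\cdot\frac{4c_up_g+\sum_b\Pr[b\in p\mid g\in p]}{1-(1-p_b)^{c_l}}$. The cross-traffic sum is handled by path diversity together with an accounting (``conservation'') argument: each $\Pr[b\in p\mid g\in p]$ is of order $1/n_2$, $1/n_1$, or $1/(n_0\npod)$ depending on the level pairing, and every cross-traffic vote that $g$ receives because of $b$ is also received by $b$, whose own tally is in addition boosted by the many flows it kills that miss $g$; this lets one charge the cross-traffic term back against $v_b$ with a coefficient that stays below one precisely when $k<\frac{n_2(n_0\npod-1)}{n_0(\npod-1)}$. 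Optimizing the resulting expression over the level of $g$, the level of $b$, and the worst-case placement of the $k$ failed links is exactly what yields $\alpha$ as in \eqref{E:alpha}: the factor $4$ in the numerator is the four links of a worst-case path (together with the $\tfrac12/\tfrac14$ weight gap), and the denominator is the slack $n_0(\npod-1)\big(\tfrac{n_2(n_0\npod-1)}{n_0(\npod-1)}-k\big)$. The side conditions $n_0\ge n_2$ and $\npod\ge1+\max\!\big[\tfrac{n_0}{n_1},\,\tfrac{n_2(n_0-1)}{n_0(n_0-n_2)},\,1\big]$ are what keep every routing-probability comparison (and $\alpha$ itself) on the correct side; with $\alpha$ so defined, \eqref{E:packetCondition} is equivalent to $v_g<v_b$ with a fixed relative margin, so a suitable $\delta>0$ exists.

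Finally I would apply the multiplicative Chernoff (relative-entropy) bound to each $\sum_iZ_i^{(\ell)}$, namely $\Pr[\sum_iZ_i^{(g)}\ge(1+\delta)Nv_g]\le e^{-N\DKL((1+\delta)v_g\|v_g)}$ and $\Pr[\sum_iZ_i^{(b)}\le(1-\delta)Nv_b]\le e^{-N\DKL((1-\delta)v_b\|v_b)}$, and union-bound over the at most $n_0n_1\npod+n_1n_2\npod$ links; since this count is polynomial while $N$ is large, the link factors are absorbed into the exponents and the total failure probability is at most $\epsilon$ as in \eqref{E:epsilon}, i.e.\ $2e^{-\calO(N)}$. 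On the complementary event every bad tally is $\ge\tfrac14(1-\delta)Nv_b$ and every good tally is $\le\tfrac12(1+\delta)Nv_g$, and the previous paragraph makes the former strictly larger, giving the ranking. The step I expect to be the real obstacle is the cross-traffic estimate: bounding $\sum_b\Pr[b\in p\mid g\in p]$ and charging it against the bad tallies tightly enough to recover the exact threshold on $k$ and the exact form of $\alpha$, which needs a careful case analysis of which tier-1 switches $g$ and the failed links share plus a worst-case placement argument; everything else is a routine Chernoff-plus-union-bound computation.
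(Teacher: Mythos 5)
Your overall architecture matches the paper's: lower-bound $v_b$ by the self-caused retransmissions, upper-bound $v_g$ by a union bound over the other links on a path through $g$ with a worst-case placement of the $k$ bad links, turn \eqref{E:packetCondition} into $v_b \geq v_g$ with a margin $\delta$, and finish with the relative-entropy Chernoff bound. The concentration step you describe is essentially the paper's Lemma on ranking (it takes $\delta \leq \frac{v_b - v_g}{v_b + v_g}$ and applies exactly the two large-deviation inequalities you state; you are in fact slightly more careful in union-bounding over all links, which the paper glosses over). However, there are two genuine problems. First, your treatment of the $1/h$ weights does not recover the stated constant. By reducing to the comparison $\tfrac14(1-\delta)Nv_b > \tfrac12(1+\delta)Nv_g$ you demand $v_b \gtrsim 2 v_g$, which after the same algebra would yield $2\alpha$ in place of $\alpha$ in \eqref{E:packetCondition}. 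The paper avoids this by an explicit (empirically justified) approximation: since $97.5\%$ of connections have the same hop count, all bad votes are treated as having equal value, so the ranking reduces to comparing raw vote \emph{counts} $\bB \sim \mathrm{Binomial}(N, v_b)$ versus $\bG \sim \mathrm{Binomial}(N, v_g)$, and only $v_b \geq v_g$ is needed. Without either that approximation or an adjusted $\alpha$, your argument as written proves a strictly weaker statement than the theorem.

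Second, the step you yourself flag as the obstacle --- bounding the cross-traffic term and extracting the exact $\alpha$ and the threshold on $k$ --- is the actual content of the paper's key lemma, and the paper's resolution is both simpler and different from the ``charging against $v_b$'' scheme you sketch. No conservation argument is used. Instead, for each link level the event of receiving a vote is decomposed as $\bA_0 \cap (\bA_1 \cup \dots \cup \bA_5)$ (resp.\ $\bB_0 \cap (\bB_1 \cup \dots \cup \bB_4)$), where $\bA_i$ indexes which hop of the path caused the retransmission; the routing coefficients of each $\bA_i$, $\bB_i$ are computed exactly from ECMP uniformity, the side conditions $n_0 \geq n_2$ and $\npod \geq 1 + \max[\,\cdot\,]$ are used precisely to identify which event has the largest coefficient (and to show the level-2 bound dominates the level-1 bound), all $k$ bad links are placed in that event, and $v_b$ is lower-bounded by $\Pr[\bA_0 \cap \bA_1]$ alone. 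The factor $4$ in \eqref{E:alpha} is the number of hops contributing $r_g$ to the union bound for a level-2 link (not a $\tfrac12/\tfrac14$ weight gap), and the denominator $n_2(n_0\npod-1) - n_0(\npod-1)k$ arises from moving the $\frac{k}{n_0} r_b$ term of the $v_g$ bound across the inequality $v_b \geq v_g$; the threshold on $k$ is exactly the condition that this denominator be positive. Since you do not supply this case analysis and the route you propose for it would not obviously terminate in the stated $\alpha$, the central estimate of the theorem remains unproved in your proposal.
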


Before proceeding, note that the typical scenario in which~$n_0 \geq 2 n_2$ and~$\frac{n_2(n_0 - 1)}{n_0(n_0 - n_2)} \leq 1$, as in our data center, the condition on the number of pods from Theorem~\ref{T:vigilWorks2} reduces to~$\npod \geq 1 + \frac{n_0}{n_1}$.

\begin{proof}

The proof proceeds as follows. First, we show that if a link has higher probability of receiving a vote, then it receives more votes if a large enough number of connections~($N$) are established. We do so using large deviation theory~\cite{ldp}, so that we can show that this does not happen actually decreases exponentially in~$N$.

\begin{lemma}\label{T:whp}

If~$v_b \geq v_g$, \Sys will rank bad links above good links with probability~$(1-\epsilon)$ for~$\epsilon$ as in~\eqref{E:epsilon}.

\end{lemma}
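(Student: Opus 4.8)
The plan is to reduce the statement to a pair of concentration inequalities for sums of independent Bernoulli trials, one controlling the vote tally of a generic good link from above and one controlling the vote tally of a generic bad link from below. First I would set up the probabilistic model precisely: each of the $N$ connections is routed independently (Remark~\ref{R:Model}), and for a fixed link $l$ the indicator that connection $i$ contributes a vote to $l$ is a Bernoulli random variable whose success probability is $v_g$ if $l$ is good and $v_b$ if $l$ is bad — here I am using that ``receiving a vote'' is determined by whether the connection traverses $l$ \emph{and} experiences a retransmission on its path, an event whose probability depends only on whether $l$ is good or bad (by symmetry of the Clos topology and the i.i.d.\ drop model). Note that the weight $1/h$ attached to a vote does not affect the \emph{ranking} argument if I first argue at the level of vote \emph{counts}; alternatively one absorbs the bounded weights into the Chernoff bound, which only changes constants in the exponent. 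Let $\bm{V}_g$ denote the tally of a fixed good link and $\bm{V}_b$ the tally of a fixed bad link, so $\E[\bm{V}_g] = N v_g$ and $\E[\bm{V}_b] = N v_b$.

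Next I would pick a separation parameter $\delta \in (0,1)$ with the property that $(1+\delta) v_g < (1-\delta) v_b$; such a $\delta$ exists precisely because $v_b \geq v_g$ (strictly, one needs $v_b > v_g$, which is where the hypothesis $p_g \le \ldots$ of Theorem~\ref{T:vigilWorks2} will later be used to guarantee a strict gap — for this lemma I take $v_b \ge v_g$ with the understanding that the interesting regime has a genuine gap, and I can state the bound with $\delta$ chosen as, say, half the relative gap). Then by the multiplicative Chernoff bound, phrased via the Kullback--Leibler divergence between Bernoulli laws,
\begin{align*}
	\Pr\!\left[ \bm{V}_g \geq (1+\delta) N v_g \right] &\leq e^{-N \DKL((1+\delta) v_g \,\|\, v_g)}, \\
	\Pr\!\left[ \bm{V}_b \leq (1-\delta) N v_b \right] &\leq e^{-N \DKL((1-\delta) v_b \,\|\, v_b)}.
\end{align*}
On the complement of both events every bad link has tally at least $(1-\delta) N v_b$ and every good link has tally at most $(1+\delta) N v_g < (1-\delta) N v_b$, so the ranking is correct. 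A union bound over the (at most $L$, in fact fewer than the crude bound $4 n_0 \npod + n_1 n_2 \npod$ links total) good and bad links would a priori introduce a polynomial prefactor; I would instead apply the union bound only over the $k$ bad links and argue the good-link side by a single bound on the \emph{maximum} tally, or simply absorb the $\mathrm{poly}(L)$ factor into the $\calO(N)$ exponent since $\DKL > 0$ is a positive constant and $L$ is fixed independent of $N$. Either way the failure probability is bounded by $e^{-N\DKL((1+\delta)v_g\|v_g)} + e^{-N\DKL((1-\delta)v_b\|v_b)} = 2e^{-\calO(N)}$, which is $\epsilon$ as in~\eqref{E:epsilon}.

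The main obstacle I anticipate is not the concentration step, which is standard, but the bookkeeping needed to make ``the probability a fixed link of a given type receives a vote'' well defined and identical across all links of that type — i.e.\ verifying that the Clos symmetry plus the uniform-random routing assumption genuinely makes the per-connection vote indicators identically distributed within each class, and independent across connections. A secondary subtlety is that votes are fractional ($1/h$ with $h$ the hop count, which is itself random between level-1-only paths and paths through tier-2); handling this cleanly requires either (a) observing that $h$ takes only two values in this topology and conditioning, or (b) noting monotonicity — a correct ranking of raw counts combined with comparable hop-length ranges suffices — or (c) invoking a weighted Chernoff bound for bounded independent summands. I would go with option (c) for brevity, remarking that replacing Bernoulli$(v)$ summands by $[0,1]$-valued summands with the same mean only improves (or leaves unchanged) the exponent, so the stated bound still holds. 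The remaining work — actually computing $v_g$, $v_b$ in terms of $p_g, p_b$ and the topology parameters, and deriving condition~\eqref{E:packetCondition} that forces $v_b > v_g$ with the claimed $\alpha$ — is deferred to the rest of the proof of Theorem~\ref{T:vigilWorks2} and is where the expression~\eqref{E:alpha} will emerge.
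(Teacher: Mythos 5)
Your proposal follows essentially the same route as the paper's proof: model the vote tally of a fixed bad (resp.\ good) link as a binomial with parameters $N$ and $v_b$ (resp.\ $v_g$), choose $\delta \leq \frac{v_b - v_g}{v_b + v_g}$ so that $(1+\delta)v_g \leq (1-\delta)v_b$, and apply the KL-form Chernoff bounds together with a union bound to obtain $\epsilon$ as in~\eqref{E:epsilon}. Your two refinements --- absorbing the fractional $1/h$ weights via a bounded-summand Chernoff bound (the paper instead argues that $97.5\%$ of connections have hop count $5$ so all votes carry the same value) and flagging the per-link union bound over all good/bad links that the paper elides --- are more careful than the paper but do not change the argument.
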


With Lemma~\ref{T:whp} in hands, we then need to relate the probabilities of a link receiving a vote~($v_b,v_g$) to the link drop rates~($p_b,p_g$). This will allow us to derive the signal-to-noise ratio condition in~\eqref{E:packetCondition}. Note that the probability of a link receiving a vote is the probability of a flow going through the link \emph{and} that a retransmission occurs~(i.e., some link in the flow's path drops at least one packet). Hence, we relate these probabilities by exploiting the combinatorial structure of ECMP in the Clos topology.



\begin{lemma}\label{T:scoringProb}

In a Clos topology with~$n_0 \geq n_2$ and
$\npod \geq 1 + \max \left[ \frac{n_0}{n_1}, \frac{n_2(n_0 - 1)}{n_0(n_0 - n_2)}, 1 \right]$, it holds that for~$k \leq n_0$ bad links
\begin{subequations}\label{E:scoringProb}
\begin{align}
	v_b &\geq \frac{r_b}{n_0 n_1 \npod}
		\label{E:badScoringProb}
	\\
	v_g &\leq \frac{1}{n_1 n_2 \npod} \frac{n_0(\npod -1)}{n_0\npod -1}
	\left[
		(4 - \frac{k}{n_0}) r_g + \frac{k}{n_0} r_b
	\right]
		\label{E:goodScoringProb}
\end{align}
\end{subequations}
where~$r_b$ and~$r_g$ are the probabilities of a retransmission occurring due to a bad and a good link, respectively.

\end{lemma}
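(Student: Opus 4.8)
The plan is to write the event ``link $l$ gets a vote on a given connection'' as the conjunction of ``the connection's (forward) path $P$ passes through $l$'' and ``the connection suffers a retransmission'', and to bound the two pieces separately using the ECMP combinatorics of the Clos topology from Remark~\ref{R:Model}. I will use that an intra-pod connection routes ToR--$T_1$--ToR ($h=2$) and an inter-pod one routes ToR--$T_1$--$T_2$--$T_1$--ToR ($h=4$), and that ECMP spreads a connection uniformly over the $n_1$ tier-1 switches of its source pod and, when inter-pod, uniformly over the $n_2$ tier-2 switches and the $n_1$ destination-pod tier-1 switches.

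First I would compute $\Pr[l\in P]$ for a uniformly random connection, separately by level. For a level-$1$ link $(t,s)$ this needs an endpoint under ToR $t$ and an ECMP hit on $s$, giving a probability of order $\tfrac{1}{n_0 n_1\npod}$; for a level-$2$ link $(s,u)$ it additionally needs the connection to be inter-pod with an endpoint in $s$'s pod and an ECMP hit on $u$, giving order $\tfrac{1}{n_1 n_2\npod}\cdot\tfrac{n_0(\npod-1)}{n_0\npod-1}$. A short calculation shows the hypotheses $n_0\geq n_2$ and $\npod\geq 1+\tfrac{n_2(n_0-1)}{n_0(n_0-n_2)}$ are exactly equivalent to the level-$2$ routing probability dominating the level-$1$ one, which identifies the binding level in each bound below.

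For the lower bound~\eqref{E:badScoringProb}: a bad link $l$ certainly collects a vote from any connection routed through $l$ whose own packets it drops at least once, which conditioned on routing has probability at least $r_b$ by the definition of $r_b$ and independently of the route; multiplying by the level-$1$ routing probability (the smaller one, valid for an arbitrary bad link) yields $v_b\geq\tfrac{r_b}{n_0 n_1\npod}$. For the upper bound~\eqref{E:goodScoringProb}: a good link $l$ gets a vote only if \emph{some} link on $P$ causes a retransmission, so by a union bound over the at most four links of $P$, $\Pr[\text{retx}\mid P]\leq\sum_{e\in P}r_e$ with $r_e\in\{r_g,r_b\}$; taking the expectation over $P$ conditioned on $l\in P$, one link is $l$ (good) and I would bound the expected number of the remaining path links that are bad by $k/n_0$ (worst case: all $k$ bad links incident to switches $P$ must use, each then appearing with conditional probability at most $\sim 1/n_0$). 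Using $r_b\geq r_g$ this gives $\E[\Pr[\text{retx}\mid P]\mid l\in P]\leq(4-\tfrac{k}{n_0})r_g+\tfrac{k}{n_0}r_b$, and multiplying by the level-$2$ routing probability (the larger one, valid for an upper bound) gives~\eqref{E:goodScoringProb}, hence~\eqref{E:scoringProb}.

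The hard part is the conditional counting in the upper bound: conditioning on $l\in P$ correlates the remaining ECMP draws with the switches neighboring $l$, so one must argue carefully how many bad links can co-occur on $P$ and with what probability; this is where $k\leq n_0$ and the precise $\npod$ lower bound (including the $\tfrac{n_0}{n_1}$ term) get used, and where the cases ``$l$ level-$1$'' and ``$l$ level-$2$'' must be treated separately and the worse constant kept. A secondary nuisance is nailing the routing-probability constants exactly (path directionality relative to a dropped packet, and not double-counting a link used as a first versus a last hop). Given~\eqref{E:scoringProb}, combining it with Lemma~\ref{T:whp} and the elementary bounds $r_b\geq 1-(1-p_b)^{c_l}$ and $r_g\leq c_u p_g$ reduces $v_b\geq v_g$ to the stated condition~\eqref{E:packetCondition} with $\alpha$ as in~\eqref{E:alpha} by routine algebra.
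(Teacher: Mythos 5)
Your proposal is correct and follows essentially the same route as the paper: decompose the vote event into (path traverses the link) $\cap$ (some link on the path causes a retransmission), compute the level\nobreakdash-1 and level\nobreakdash-2 routing probabilities $\tfrac{1}{n_0 n_1 \npod}$ and $\tfrac{1}{n_1 n_2 \npod}\tfrac{n_0(\npod-1)}{n_0\npod-1}$, use the $\npod$ and $n_0\geq n_2$ hypotheses to decide which level is binding (the smaller for the lower bound on $v_b$, the larger for the upper bound on $v_g$), and apply a union bound with the $k$ bad links placed in the worst-case slot of conditional weight $1/n_0$ to get the $(4-\tfrac{k}{n_0})r_g+\tfrac{k}{n_0}r_b$ factor. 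The "hard part" you flag is handled in the paper exactly as you sketch it, by enumerating the disjoint per-hop events for each level and assigning all $k$ bad links to the event with the largest coefficient ($\bA_3$ for level 1, $\bB_2$ for level 2), then checking $v_{g,2}\geq v_{g,1}$.
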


Before proving these lemmata, let us see how they imply Theorem~\ref{T:vigilWorks2}. From the~\eqref{E:scoringProb} in Lemma~\ref{T:scoringProb}, it holds that
\begin{equation}\label{E:retransmissionBound}
	r_b
	\geq
	\underbrace{
		\frac{
			n_0 (4 n_0 - k)(\npod -1)
		}{
			n_2(n_0\npod - 1) - n_0 (\npod -1) k
		}
	}_{\alpha} r_g
	\Rightarrow
	v_b \geq v_g
		\text{,}
\end{equation}
for~$k < \frac{n_2 (n_0 \npod-1)}{n_0 (\npod - 1)} < n_0$. Thus, in a Clos topology, if the probability of retransmission due to a bad link is large enough compare to a good link, i.e.,~$r_b \geq \alpha r_g$ for $\alpha$ as in~\eqref{E:alpha}, then we have that the probability of a bad link receiving a vote is larger than that of a good link~($v_b \geq v_g$).

Still, \eqref{E:retransmissionBound} gives a relation in terms of the probabilities of retransmission~($r_g,r_b$) instead of the packet drop rates~($p_g$,~$p_b$) as in~\eqref{E:packetCondition}. To obtain~\eqref{E:packetCondition}, note that the probability~$r$ of retransmission during a connection with $c$~packets due to a link that drops packets with probability~$p$ is~$r = 1 - (1-p)^c$. Since~$r$ is monotonically increasing in~$c$, we have that~$r_b \geq 1 - (1-p_b)^{c_l}$. Similarly, $r_g \leq 1 - (1-p_g)^{c_u}$. Using the fact~$(1-x)^n \geq 1 - nx$ yields~\eqref{E:packetCondition}.
\end{proof}

We now proceed with the proofs of Lemmata~\ref{T:whp} and~\ref{T:scoringProb}.

\begin{proof}[Proof of Lemma~\ref{T:whp}]

We start by noting that in a datacenter-sized Clos network, almost every connection has a hop count of~$5$. In our datacenter, this happens to~$97.5\%$ of connections. Therefore, we can approximate links votes by assuming all bad votes have the same value. Thus, suffices to determine how many votes each link has.

Since links cause retransmissions independently across connections~(see Remark~\ref{R:Model}), the number of votes received by a bad link is a binomial random variable~$\bB$ with parameters~$N$, the total number of connections, and~$v_b$, the probability of a bad link receiving a vote. Similarly, let~$\bG$ be the number of votes on a good link, a binomial random variable with parameters~$N$ and~$v_g$. \Sys will correctly rank the bad links if~$\bB \geq \bG$, i.e., when bad links receive more votes than good links. This event contains the event~$\calD = \{\bG \leq (1+\delta) N v_g \cap \bB \geq (1-\delta) N v_b\}$ for $\delta \leq \frac{v_b - v_g}{v_b + v_g}$. Using the union bound~$\Pr\left[ \bigcup_i \calE_i \right] \leq \sum_i \Pr\left[ \calE_i \right]$~\cite{probBook}, the probability of \Sys identifying the correct links is therefore bounded by
\begin{equation}\label{E:pBound}
\begin{aligned}
	\Pr(\bB \geq \bG) &\geq
	\Pr\left[ \bG \leq (1+\delta) N v_g \cap
				\bB \geq (1-\delta) N v_b \right]
	\\
	{}&\geq 1 - \Pr\left[ \bG \geq (1+\delta) N v_g \right]
	\\
	&\hphantom{{}\geq 1}
		{}-	\Pr\left[ \bB \leq (1-\delta) N v_b \right]
\end{aligned}
\end{equation}

To proceed, note that the probabilities in~\eqref{E:pBound} can be bounded using the large deviation principle~\cite{ldp}. Indeed, let~$\bS$ be a binomial random variable with parameters~$M$ and~$q$. For~$\delta > 0$ it holds that
\begin{subequations}\label{E:LDP}
\begin{align}
	\Pr\left[ \bS \geq (1+\delta) q M \right] &\leq
		e^{-M \DKL((1+\delta) q \| q)}
	\\
	\Pr\left[ \bS \leq (1-\delta) q M \right] &\leq
		e^{-M \DKL((1-\delta) q \| q)}
\end{align}
\end{subequations}
where $\DKL(q \| r )$ is the Kullback-Leibler divergence between two Bernoulli distributions with probabilities of success~$q$ and~$r$~\cite{infotheory}. Explicitly,
\begin{equation*}
	\DKL(q \| r ) =
		q \log \left( \frac{q}{r} \right) +
		(1 - q) \log \left( \frac{1 - q}{1 - r} \right)
		\text{.}
\end{equation*}
Substituting the inequalities~\eqref{E:LDP} into~\eqref{E:pBound} yields~\eqref{E:epsilon}.
\end{proof}

\begin{figure}[tb]
	\centering
	\includesvg{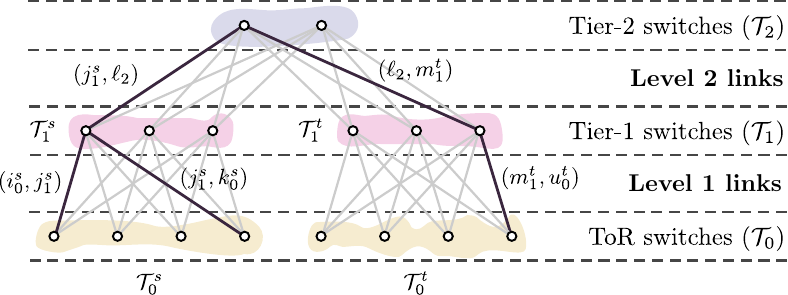}
	\caption{Illustration of notation for Clos topology used in the proof of Lemma~\ref{T:scoringProb}}
	\label{F:Clos}
\end{figure}

\begin{proof}[Proof of Lemma~\ref{T:scoringProb}]

Before proceeding, let~$\calT_0$, $\calT_1$, and~$\calT_2$ denote the set of ToR, tier-1, and tier-2 switches respectively~(Figure~\ref{F:Clos}). Also let~$\calT_0^s$ and~$\calT_1^s$, $s = [\npod]$, denote the tier-0 and tier-1 switches in pod~$s$ respectively. Note that~$\calT_0 = \calT_0^1 \cup \dots \cup \calT_0^{\npod}$ and~$\calT_1 = \calT_1^1 \cup \dots \cup \calT_1^{\npod}$. Note that we use subscripts to denote the switch tier and superscripts to denote its pod. To clarify the derivations, we maintain this notation for indices. For instance, $i_0^s$ is the $i$-th tier-0 switch from pod~$s$, i.e., $i_0^s \in \calT_0^s$, and~$\ell_2$ is the $\ell$-th tier-2 switch. Note that tier-2 switches do not belong to specific pods. We write~$(i_0^s,j_1^s)$ to denote the level~1 link that connects~$i_0^s$ to~$j_1^s$~(as in Figure~\ref{F:Clos}) and use~$r(i_0^s,j_1^s) = r(j_1^s,i_0^s)$ to refer to the probability of link~$(i_0^s,j_1^s)$ causing a retransmission. Note that $r$ is also a function of the number of packets in a connection, but we omit this dependence for clarity.

\begin{subequations}\label{E:level1ScoringProb}

The bounds in~\eqref{E:scoringProb} are obtained by decomposing the events that \Sys votes for a level~1 or level~2 link into a union of simpler events. Before proceeding, note that each connection only goes through one link in each level and in each direction, so that events such as ``going through a ToR to tier-1 link'' are disjoint.

Starting with level~1, let~$\bA_0$ be the event that a connection goes through link~$(i_0^s,j_1^s)$, i.e., a link that connects a ToR to a tier-1 switch in any pod. This event happens with probability
\begin{equation}
	\Pr\left[ \bm{A_0} \right] = \frac{1}{n_0 n_1 \npod}
		\text{,}
\end{equation}
given that there are~$n_0 n_1 \npod$ level~1 links and that connections occur uniformly at random. The link~$(i_0^s,j_1^s)$ will get a vote if one of five things occur: (i)~it causes a retransmission; (ii)~the connection stays within the pod and some other link causes a retransmission; (iii)~the connection leaves the pod and a link between a tier-1 and tier-2 switch causes the retransmission; (iv)~the connection leaves the pod and a link between a tier-2 and tier-1 switch causes the retransmission; or (v)~the connection leaves the pod and a link between a tier-1 and ToR switch in the other pod causes the retransmission. Formally, the link~$(i_0^s,j_1^s)$ receives a vote if a connection goes through it~(event $\bA_0$) \emph{and} either of the following occurs:
\begin{itemize}

\item event $\bA_1$: $(i_0^s,j_1^s)$ causes a retransmission, i.e.,
\begin{equation}
	\Pr\left[ \bm{A_1} \right] = r(i_0^s,j_1^s)
\end{equation}

\item event $\bA_2$: the connection also goes through some $(j_1^s,k_0^s)$, $k_0^s \neq i_0^s$, and~$(j_1^s,k_0^s)$ causes a retransmission. Therefore,
\begin{equation}
	\Pr\left[ \bA_2 \right] =
	\underbrace{
		\frac{1}{n_0 \npod - 1}
	}_\text{connect to $k_0^s$}
	\sum_{k_0^s \in \calT_0^s \setminus \{i_0^s\}} r(j_1^s,k_0^s)
\end{equation}

\item event $\bA_3$: the connection also goes through some $(j_1^s,\ell_2)$ and~$(j_1^s,\ell_2)$ causes a retransmission, which occurs with probability
\begin{equation}
	\Pr\left[ \bA_3 \right] =
	\underbrace{
		\frac{n_0 (\npod - 1)}{n_0 \npod - 1}
	}_\text{leave pod $s$}
	\underbrace{
		\frac{1}{n_2}
	}_{\substack{\text{go through}\\\ell_2}}
	\sum_{\ell_2 \in \calT_2} r(j_1^s,\ell_2)
\end{equation}

\item event $\bA_4$: the connection also goes through some $(\ell_2,m_1^t)$, $t \neq s$, and~$(\ell_2,m_1^t)$ causes a retransmission, so that
\begin{equation}
	\Pr\left[ \bA_4 \right] =
	\underbrace{
		\frac{n_0}{n_0 \npod - 1}
	}_\text{go to pod $t$}
	\underbrace{
		\frac{1}{n_1 n_2}
	}_{\substack{\text{go through}\\(\ell_2,m_1^t)}}
	\sum_{\substack{
		\ell_2 \in \calT_2,
		\\
		m_1^t \in \calT_1^t,
		\\
		t \in [\npod] \setminus s}}
	r(\ell_2,m_1^t)
\end{equation}

\item event $\bA_5$: the connection also goes through some $(m_1^t,u_0^t)$, $t \neq s$, and~$(m_1^t,u_0^t)$ causes a retransmission. Thus,
\begin{equation}
	\Pr\left[ \bA_5 \right] =
	\underbrace{
		\frac{1}{n_0 \npod - 1}
	}_\text{go to pod $t$}
	\underbrace{
		\frac{1}{n_1}
	}_{\substack{\text{go through}\\m_1^t}}
	\sum_{\substack{
		m_1^t \in \calT_1^t,
		\\
		u_0^t \in \calT_0^t,
		\\
		t \in [\npod] \setminus s}}
	r(m_1^t,u_0^t)
\end{equation}

\end{itemize}

\end{subequations}

\begin{subequations}\label{E:level2ScoringProb}

Similarly for level~2, let~$\bB_0$ be the event that a connection goes through link~$(j_1^s, \ell_2)$, so that its probability is
\begin{equation}
	\Pr\left[ \bB_0 \right] =
	\underbrace{
		\frac{1}{\npod}
	}_\text{start in pod $s$}
	\underbrace{
		\frac{n_0 (\npod - 1)}{n_0 \npod - 1}
	}_\text{leave pod $s$}
	\underbrace{
		\frac{1}{n_1 n_2}
	}_{\substack{\text{go through}\\(j_1^s, \ell_2)}}
\end{equation}
For this link to receive a vote either (i)~it causes a retransmission; (ii)~a level~1 link from the origin pod causes a retransmission; (iii)~a link between a tier-2 and tier-1 switch causes the retransmission; or (iv)~a level~1 link in the destination pod causes the retransmission. Thus, link~$(j_1^s, \ell_2)$ gets a vote if a connection goes through $(j_1^s, \ell_2)$~(event~$\bB_0$) \emph{and} either of the following occurs:
\begin{itemize}

\item event $\bB_1$: $(j_1^s, \ell_2)$ causes a retransmission, i.e.,
\begin{equation}
	\Pr\left[ \bB_1 \right] = r(j_1^s, \ell_2)
\end{equation}

\item event $\bB_2$: the connection also goes through some $(i_0^s,j_1^s)$ and~$(i_0^s,j_1^s)$ causes a retransmission. Then,
\begin{equation}
	\Pr\left[ \bB_2 \right] =
	\underbrace{
		\frac{1}{n_0}
	}_\text{start in $i_0^s$}
	\sum_{i_0^s \in \calT_0^s} r(i_0^s,j_1^s)
\end{equation}

\item event $\bB_3$: the connection also goes through some $(\ell_2,m_1^t)$, $t \neq s$, and~$(\ell_2,m_1^t)$ causes a retransmission, which yields
\begin{equation}
	\Pr\left[ \bB_3 \right] =
	\underbrace{
		\frac{1}{n_1 (\npod - 1)}
	}_\text{go through $m_1^t$}
	\sum_{\substack{
		m_1^t \in \calT_1^t,
		\\
		t \in [\npod] \setminus s}}
	r(\ell_2,m_1^t)
\end{equation}

\item event $\bB_4$: the connection also goes through some $(m_1^t,u_0^t)$, $t \neq s$, and~$(m_1^t,u_0^t)$ causes a retransmission. Therefore,
\begin{equation}
	\Pr\left[ \bB_4 \right] =
	\underbrace{
		\frac{1}{n_0 n_1 (\npod - 1)}
	}_{\substack{\text{go through}\\(m_1^t,n_0^t),\ t \neq s}}
	\sum_{\substack{
		m_1^t \in \calT_1^t,
		\\
		u_0^t \in \calT_0^t,
		\\
		t \in [\npod] \setminus s}}
	r(m_1^t,u_0^t)
\end{equation}

\end{itemize}
\end{subequations}

To obtain the lower bound in~\eqref{E:badScoringProb}, note that a bad link receives at least as many votes as retransmissions it causes. Therefore, the probability of \Sys voting for a bad link is larger than the probability of that link causing a retransmission. Explicitly, using the fact that failure and routing are independent and~$r = r_b$, \eqref{E:level1ScoringProb} and~\eqref{E:level2ScoringProb} give
\begin{align*}
	v_b &\geq \min \left[
		\Pr(\bA_0 \cap \bA_1), \Pr(\bB_0 \cap \bB_1)
	\right]
	\\
	{}&= \min \left[
		\frac{1}{n_0 n_1 \npod},
		\frac{1}{n_1 n_2 \npod} \frac{n_0 (\npod - 1)}{n_0 \npod - 1}
	\right] r_b
		\text{.}
\end{align*}
The assumption that~$\npod \geq 1 + \frac{n_2 (n_0 - 1)}{n_0 (n_0 - n_2)}$ makes the first term smaller than the second and yields~\eqref{E:badScoringProb}.

In contrast, the upper bound in~\eqref{E:goodScoringProb} is obtained by applying the union bound~\cite{probBook} to~\eqref{E:level1ScoringProb} and~\eqref{E:level2ScoringProb}. Indeed, this leads to the following inequalities for the probability of \Sys voting for a good level~1 and level~2 link:
\begin{subequations}\label{E:levelsProbBound}
\begin{align}
	v_{g,1} &=
	\Pr\left[ \bA_0 \cap (\bA_1 \cup \bA_2 \cup
				\bA_3 \cup \bA_4 \cup \bA_5) \right]	
	\notag\\
	{}&\leq \Pr[ \bA_0 ]
		\left(
			\sum_{i = 1}^{5} \Pr[\bA_i]
		\right)
		\label{E:level1ProbBound}
	\\
	v_{g,2} &=
	\Pr\left[ \bB_0 \cap
				(\bB_1 \cup \bB_2 \cup \bB_3 \cup \bB_4) \right]
	\notag\\
	{}&\leq \Pr[ \bB_0 ]
		\left(
			\sum_{i = 1}^{4} \Pr[\bB_i]
		\right)
		\label{E:level2ProbBound}
\end{align}
\end{subequations}
where~$v_{g,1}$ and~$v_{g,2}$ denote the probability of a good level~1 and level~2 link being voted bad, respectively. Note that once again used the independence between failures and routing. From~\eqref{E:levelsProbBound}, it is straightforward to see that~$v_g \leq \max \left[ v_{g,1}, v_{g,2} \right]$.

To obtain~\eqref{E:goodScoringProb}, we first bound~\eqref{E:levelsProbBound} by assuming that all~$k$ bad links belong to the event~$\bA_i$ and~$\bB_i$, $i \geq 2$, that maximize~$v_{g,1}$ and~$v_{g,2}$. For a good level~1 link, it is straightforward to see from~\eqref{E:level1ScoringProb} that since~$n_0 \geq n_2$, event~$\bA_3$ has the largest coefficient. Thus, taking all links to be good except for~$k$ bad links satisfying~$\bA_3$ one has
\begin{multline}\label{E:boundSg1}
	v_{g,1} \leq \frac{1}{n_0 n_1 \npod}
		\frac{n_0 (\npod - 1)}{n_0 \npod - 1} \times{}
		\\
		\left[
			\left(
				4 - \frac{k}{n_2} +
				\frac{2 (n_0 - 1)}{n_0 (\npod - 1) }
			\right) r_g
			+
			\frac{k}{n_2} r_b
	\right]
	\text{,}
\end{multline}
which holds for $k \leq n_2$. Similarly for a good level~2 link, since~$\npod \geq \frac{n_0}{n_1} + 1$ it holds from~\eqref{E:level2ScoringProb} that event~$\bB_2$ has the largest coefficient. Therefore,
\begin{multline}\label{E:boundSg2}
	v_{g,2} \leq \frac{1}{n_1 n_2 \npod}
		\frac{n_0 (\npod - 1)}{n_0 \npod - 1} \times{}
		\\
		\left[
			\left( 4 - \frac{k}{n_0} \right) r_g
			+
			\frac{k}{n_0} r_b
		\right]
		\text{,}
\end{multline}
which holds for~$k \leq n_0$. Straightforward algebra shows that for~$\npod \geq 2$, $v_{g,2} \geq v_{g,1}$, from which~\eqref{E:badScoringProb} follows.
\end{proof}


\section{Greedy solution of the binary program}

\begin{algorithm}[t]
 \begin{algorithmic}[1]
\STATE $\calF$: set of failed links
\STATE $\calC$: set of failed connections
\STATE $\calF \gets \emptyset$
\WHILE{$\calC \neq \emptyset$}
  \STATE $l \gets$ link that explains the most number of additional failures
  \STATE $\calL \gets$ failures explained by~$l$
  \STATE $\calF \gets \calF \cup \{l\}$
  \STATE $\calC \gets \calC - \calL$
\ENDWHILE
\RETURN  $\calF$
\end{algorithmic}
	\caption{Finding the most problematic links in the network.}
	\label{L:greedy}
\end{algorithm}

When discussing optimization-based alternatives to \Sys's voting scheme, we presented the following problem which we dubbed the \emph{binary program}
\begin{equation}\label{opt:binary2}
\begin{aligned}
	\minimize&
	&&\norm{\bm{p}}_0
	\\
	\text{subject to}&
	&&\bm{A} \bm{p} \geq \bm{s}
	\\
	&&&\bm{p} \in \{0,1\}^L
\end{aligned}
\end{equation}
where~$\bm{A}$ is a~$C \times L$ routing matrix; $\bm{s}$ is a~$C \times 1$ vector that collects the status of each flow during an epoch (each element of~$\bm{s}$ is~$1$ if the connection experienced at least one retransmission and~$0$ otherwise); $L$ is the number of links; $C$ is the number of connections in an epoch; and~$\norm{\bm{p}}_0$ denotes the number of nonzero entries of the vector~$\bm{p}$.

Problem~\eqref{opt:binary2} can be described as one of looking for the \emph{smallest number of links that explains all failures}. To see this is the case, start by noting that~$\bm{p}$ is an~$L \times 1$ whose $i$-th entry describe whether link~$i$ is believed to have failed or not. Thus, since~$\bm{A}$ is the routing matrix, the~$C \times 1$ vector~$\bm{A} \bm{p}$ describes whether~$\bm{p}$ explains a possible failure in that connection or not: if~$[\bm{A} \bm{p}]_i = 0$, then~$\bm{p}$ does not explain a possible failure in the~$i$-th connection; if~$[\bm{A} \bm{p}]_i > 0$, then~$\bm{p}$ explains a possible failure in the~$i$-th connection. Hence, the constraint~$\bm{A} \bm{p} \geq \bm{s}$ can be read as ``explain each failure at least once''. Note that in an attempt to explain \emph{all} failure, $\bm{p}$ may explain failures that did not occur. This is the reason we used the term ``possible failure'' earlier. Recalling that the objective of~\eqref{opt:binary2} is to minimize the number of ones in~$\bm{p}$, i.e., the number of links marked as ``failed'', gives the interpretation from the beginning of the paragraph.

As we noted before, the binary program is NP-hard in general. Its solution is therefore typically approximated using a greedy procedure. Given the interpretation from the last paragraph, we can describe the greedy solution of~\eqref{opt:binary2} as in Algorithm~\ref{L:greedy}~\cite{combinatorial}. The algorithm proceeds as follows. Start with an empty set of failed links~$\calF$ and a set of unexplained failures~$\calC$. At each step, find the single link~$l$ that explains the largest number of unexplained failures, add it to~$\calF$, and remove from~$\calC$ all the failures it explains. We then iterate until~$\calC$ is empty. Note that this is the procedure followed by MAX COVERAGE and Tomo~\cite{netdiagnoser, risk}. They both therefore approximate the solution of~\eqref{opt:binary2}.

\end{document}